\newtheorem{theorem}{Theorem}[section]
\newtheorem{lemma}[theorem]{Lemma}
\newtheorem{prop}[theorem]{Proposition}
\newtheorem{cor}[theorem]{Corollary}
\theoremstyle{definition}
\newtheorem{defn}[theorem]{Definition}
\newtheorem{remark}[theorem]{Remark}
\numberwithin{equation}{section}
\newcommand*{\bw}[1]{#1\nobreak\discretionary{}{\hbox{$\mathsurround=0pt #1$}}{}}
\def\beq{\begin{equation}}
\def\eeq{\end{equation}}
\def\beqs{\begin{equation*}}
\def\eeqs{\end{equation*}}
\newcommand{\hr}[1]{\left(#1\right)}                                                    
\newcommand{\ha}[1]{\left\langle#1\right\rangle}                                        
\newcommand{\hs}[1]{\left[#1\right]}                                                    
\newcommand{\hc}[1]{\left\{#1\right\}}                                                  
\newcommand{\br}[1]{\bigl(#1\bigr)}                                                     
\newcommand{\Bc}[1]{\Bigl\{#1\Bigr\}}                                                   
\newcommand{\sco}{,\ldots,}                                                             
\newcommand{\spl}{\bw+\ldots\bw+}                                                       
\newcommand{\sd}{\bw\cdot\ldots\bw\cdot}                                                
\newcommand{\st}{\bw\times\ldots\bw\times}                                              
\newcommand{\suml}[2]{\sum\limits_{{#1}}^{{#2}}}                                        
\newcommand{\sums}[1]{\sum\limits_{{#1}}}                                               
\renewcommand{\ge}{\geqslant}
\renewcommand{\le}{\leqslant}
\newcommand{\subs}{\subset}
\newcommand{\sups}{\supset}
\newcommand{\subseq}{\subseteq}
\newcommand{\longra}{\longrightarrow}
\newcommand{\la}{\leftarrow}
\newcommand{\hra}{\hookrightarrow}                                                      
\newcommand{\eqdef}{\stackrel{\mbox{\tiny\rm def}}{=}}                                  
\def\Ad{\operatorname{Ad}}
\def\a{\mathfrak a}
\def\ad{\operatorname{ad}}
\def\al{\alpha}
\def\C{\mathbb C}
\def\de{\delta}
\def\G{\mathcal G}
\def\Gr{\mathrm{Gr}}
\def\g{\mathfrak g}
\def\ga{\gamma}
\def\gl{\mathfrak{gl}}
\def\La{\Lambda}
\def\la{\lambda}
\def\Kc{\mathcal K}
\def\M{\mathcal M}
\def\Maps{\mathrm{Maps}}
\def\Mat{\mathrm{Mat}}
\def\Mu{\mathrm{M}}
\def\Oc{\mathcal O}
\def\one{\mathds 1}
\def\ov{\overline}
\def\Pbb{\mathbb P}
\def\Pc{\mathcal P}
\def\rf#1{(\ref{#1})}
\def\res{\operatorname*{res}}
\def\si{\sigma}
\def\sl{\mathfrak{sl}}
\def\Sc{\mathcal S}
\def\Sp{\operatorname{Sp}}
\def\Sym{\mathrm{Sym}}
\def\tr{\operatorname{tr}}
\def\Tc{\mathcal T}
\def\U{\rm U}
\def\wt{\widetilde}
\def\Y{\mathrm Y}
\def\Z{\mathbb Z}
\def\Zc{\mathcal Z}
\begin{document}

\title[Poisson Geometry of Monic Matrix Polynomials]{Poisson Geometry of Monic Matrix Polynomials}

\author{Alexander Shapiro}
\address{
Alexander Shapiro \medskip
\newline University of California, Berkeley, Department of Mathematics,
\newline Berkeley, CA, 94720, USA; \smallskip
\newline Institute of Theoretical \& Experimental Physics,
\newline 117259, Moscow, Russia; \smallskip
}

\email{shapiro@math.berkeley.edu}

\begin{abstract}
We study the Poisson geometry of the first congruence subgroup $G_1[[z^{-1}]]$ of the loop group $G[[z^{-1}]]$ endowed with the rational r-matrix Poisson structure for $G=GL_m$ and $SL_m$. We classify all the symplectic leaves on a certain ind-subvariety of $G_1[[z^{-1}]]$ in terms of Smith Normal Forms. This classification extends known descriptions of symplectic leaves on the (thin) affine Grassmannian and the space of $SL_m$-monopoles. We show that a generic leaf is covered by open charts with Poisson transition functions, the charts being birationally isomorphic to products of coadjoint $GL_m$ orbits. Finally, we discuss our results in terms of (thick) affine Grassmannians and Zastava spaces.
\end{abstract}

\maketitle

\section*{Introduction}

One of the central problems in the theory of integrable systems is the description of geometry of their phase spaces. The majority of known systems are modelled on the symplectic leaves of Poisson-Lie groups. In the present paper we provide a classification of symplectic leaves on the space $\M$ of (monic) matrix polynomials with the Yangian Poisson bracket. In order to motivate this problem and explain the obtained classification, let us recall some known results of the same nature. Probably, the most studied example of a Poisson-Lie group is a complex simple Lie group $G$ endowed with the so-called standard Poisson structure. Its symplectic leaves and corresponding discrete integrable systems were investigated in~\cite{HKKR00, KZ02}. The isomorphism classes of its symplectic leaves are in bijection with the double Bruhat cells on $G$, and thus are classified by pairs $(u,v)$ of the Weyl group elements. These results were generalized to the case of affine Kac-Moody groups with a trigonometric $r$-matrix in~\cite{Wil13}. In the Kac-Moody case, a class of discrete integrable systems~\cite{FM14, GSTV12, GSTV14, GK13} was obtained in the following way. Every double Bruhat cell is covered by a family of open charts parameterized by double reduced words representing $(u,v)$. These charts admit canonical Poisson isomorphisms and Poisson transition functions on the intersections. Compositions of the latter provide a family of discrete Poisson transformations that may be realised as compositions of cluster mutations.

Recall that affine Kac-Moody groups are certain central extensions of polynomial loop groups $G[z,z^{-1}]$. Another natural Poisson structure on $G[z,z^{-1}]$ comes from the rational $r$-matrix. In this case, the loop group $G[z,z^{-1}]$ has $G[z]$ and the kernel $G_1[z^{-1}]$ of the evaluation homomorphism $G[z^{-1}] \to G$ as Poisson subvarieties. At the same time, $G_1[z^{-1}]$ is an open set in the thin affine Grassmannian $\Gr = G[z,z^{-1}]/G[z]$. The symplectic leaves of $\Gr$ with the rational structure induced from $G[z,z^{-1}]$ were studied in~\cite{KWWY14}. Relying on the results of~\cite{LY08} for finite-dimensional groups, it was shown that the leaves of $\Gr$ are of the form $$
\Gr^\al_\beta = (G[z]z^\al \cap G_1[z^{-1}]z^\beta)G[z]/G[z],
$$
where $\al$ and $\beta$ are dominant coweights of $\g$ and $\al\ge\beta$. Therefore, as in the trigonometric case, the symplectic leaves on $\Gr$ are parameterized by combinatorial data encoded in a pair of coweights. At the same time, the symplectic leaves on $G_1[z^{-1}]$ are classified by a single coweight $\al$, or equivalently, by a pair $(\al,0)$.

Although, polynomial loop groups give rise to many examples of integrable systems, it seems that their generality is not fully satisfying. For example, $SL_n$ magnetic chains studied in~\cite{Gek95, Skl92} can not be modelled on symplectic leaves of $G[z,z^{-1}]$. The reason is that elements of the polynomial loop group $GL_m[z,z^{-1}]$ need to have polynomial inverses. In particular, this implies that in any representation their determinants are Laurent monomials. On the other hand, in order to model magnetic chains, one has to allow for loops whose determinants may be any nonzero polynomials. Thus, a natural choice for underlying Poison (ind-)variety would be the space of matrix polynomials which we consider in this paper. In order to incorporate usual techniques from the theory of Poisson-Lie groups we realise the space of matrix polynomials as an ind-subvariety in the formal loop group $GL_m((z^{-1}))$. We note that this construction admits a generalization for any complex semi-simple Lie group which we consider in the forthcoming publication.

Another motivation for studying Poisson geometry of the space $\M$ of (monic) matrix polynomials is that $\M$ serves as a quasi-classical analogue of the Yangian $\Y(\gl_m)$. Following the general idea of geometric quantization, there should exist a certain correspondence between irreducible representations of $\Y(\gl_m)$ and symplectic leaves of $\M$. From this point of view, the geometry of $\M$ has been studied in~\cite{GKL04, GKLO05}. Namely, in~\cite{GKLO05} for any dominant coweight $\al$ of $\g$ there was obtained a $\Y(\g)$-module depending on a number of parameters. This generalized the Gelfand-Zetlin type representations obtained in~\cite{GKL04}, which correspond to the case of $\g=\sl_m$ and $\al$ being the first fundamental coweight. It was also shown in~\cite{GKLO05} that the quasi-classical limits of the obtained representations are birationally isomorphic to open subsets in the space of $G$-monopoles. Finally, these representations were used to find the explicit solutions of the quantum open Toda chain and the quantum hyperbolic Sutherland model.

Over the past ten years, the results of~\cite{GKLO05} were extended in several different directions. In~\cite{KWWY14} a family of representations (again, depending on a dominant coweight $\al$ and a number of parameters) of shifted Yangians $\Y_\beta$ was constructed. In case $\beta=0$ this repeats the result of~\cite{GKLO05}. Moreover, these new representations were proven (modulo a technical conjecture) to quantize slices $\Gr^\al_\beta \subs \Gr$. It was proven in~\cite{FKRD15} that the Atiyah-Hitchin symplectic structure on the space of $G$-monopoles~\cite{AH88, FKMM99} coincides under the birational isomorphism described in~\cite{GKLO05} with the rational $r$-matrix Poisson structure on the thin affine Grassmannian. Finally, in~\cite{N09} there was established another connection between a space of monopoles and quantum integrable systems. More precisely, certain generating function on the Laumon space was shown to be the eigenfunction of the quantum trigonometric Calogero-Sutherland hamiltonian.

In the present paper we classify the symplectic leaves on the space $\M$ of monic matrix polynomials endowed with the rational $r$-matrix Poisson bracket. As a corollary we obtain a description of all symplectic leaves on a certain ind-subvariety $\G$ of the thick affine Grassmannian $G((z^{-1}))/G[z]$ for $G=SL_m$. The subvariety $\G$ consists of elements of the form $g(z)P(z)$ where $P(z)\in\Mat_m[z^{-1}]$ is a monic matrix polynomial in $z^{-1}$ and $g(z)\in\C[[z^{-1}]]$ is a formal monic power series representing an $m$-th root of $\det P(z)$. Note, that $\G$ contains the thin affine Grassmannian $\Gr$ as a Poisson ind-subvariety. The classification of symplectic leaves is given in terms of Smith Normal Forms (see section~\ref{section-SNF}), in other words, symplectic leaves are parameterized by sets of polynomials $d_1(z) \sco d_m(z)$ so that $d_{i+1}$ divides $d_i$ for $i=1 \sco m-1$ and the sum of their degrees is divisible by $m$. Let $r_i$ be the degree of $d_i(z)$ and $r_1 \spl r_m = mn$. We call $\al = (\al_1 \sco \al_m)$ with $\al_i = r_i-n$ the type of the leaf. We prove that all the leaves of the same type are Poisson birationally isomorphic. For each symplectic leaf we provide its dimension and describe its closure. Finally, we prove that a generic symplectic leaf is covered by open charts with birational Poisson transition functions. Moreover, each chart is birationally isomorphic to a product of $GL_m$ coadjoint orbits. This prepares a ground for discrete integrable systems, which we plan to study elsewhere.

Our result generalizes (for $G = SL_m$) the descriptions of symplectic leaves from~\cite{GKLO05} where leaves were classified only up to their type and of~\cite{KWWY14} where a special case of our result with $d_i(z) = z^{r_i}$ was obtained. In particular, this answers a question raised in~\cite{GKLO05} on how their description may be interpreted from the point of view of Poisson-Lie theory. Note that in both~\cite{GKLO05} and~\cite{KWWY14} the leaves were classified by coweights of the Lie algebra $\g$, thus the roots of polynomials $d_i(z)$ were lost and only the combinatorial data of their degrees survived. At the same time, it seems that the roots of polynomials $d_i(z)$ play the role of quantization parameters in~\cite{GKLO05} and~\cite{KWWY14}. We also note that under the birational isomorphism from~\cite{GKLO05}, the roots of polynomials $d_i(z)$ correspond to colored divisors in a partial compactification of the space of monopoles, also known as Zastava spaces~\cite{Bra06, BF14, FM99}.

The paper is organized as follows. In section 1 we recall basic facts from the Poisson-Lie theory and a definition of the Poisson ind-group. We also introduce ind-varieties $\M$ and $\G$, the main subjects of the present paper, and endow them with the Poisson ind-group structure. In section 2 we recall the Smith Normal Form theorem and obtain a classification of symplectic leaves on $\M$ and $\G$. Section 3 is devoted to the properties of symplectic leaves, we describe their dimensions, closures, and factorization of generic symplectic leaves. Finally, in section 4 we discuss in detail how our results are related to previous works.

\section*{Acknowledgements}

I express deep gratitude to my advisor, Nicolai Reshetikhin, for suggesting the topic of the present publication and for providing generous support and advice throughout. I am grateful to Piotr Achinger, Leonid Rybnikov, Gus Schrader, Ben Webster, and Alex Weekes for many valuable discussions and comments. I would also like to thank anonymous referees for many useful remarks that helped to improve this text. This research was supported by the NSF grant DGE-1106400 and by RFBR grant 14-01-00547.

\section{Poisson-Lie structure}

In this section we recall some basic facts on the finite-dimensional Poisson-Lie theory, Poisson ind-groups, and define the main subjects of the paper.

\subsection{Finite-dimensional theory}
\label{sub-fd-theory}

\begin{defn}
A \emph{Poisson variety} is a variety $M$ endowed with a \emph{Poisson bracket}
$$
\hc{\,,\,}\colon C^\infty(M)\otimes C^\infty(M)\to C^\infty(M)
$$
such that
\begin{itemize}
\item $C^\infty(M)$ is a Lie algebra with the bracket $\hc{\,,\,}$;
\item the Leibniz rule is satisfied, i.e. for any $\phi,\psi,\eta \in C^\infty(M)$ one has
$$
\hc{\phi\psi,\eta} = \phi\hc{\psi,\eta} + \hc{\phi,\eta}\psi
$$
\end{itemize}
\end{defn}

For any function $\phi\in C^\infty(M)$, the map
$$
\hc{\phi,-} \colon C^\infty(M)\to C^\infty(M),\quad \psi\mapsto\hc{\phi,\psi}
$$
is a derivation, thus defines a vector field $\xi_\phi \in Vect(M)$ by the formula
$$
\ha{\xi_\phi,d\psi} = \hc{\phi,\psi}.
$$
Such vector fields are called \emph{Hamiltonian}. In particular, we see that the bracket $\hc{\phi,\psi}$ depends only on $d\phi \wedge d\psi$ and there exists a \emph{Poisson bivector field} $\pi\in\Gamma(\La^2TM)$ uniquely defined by
\beq
\label{bracket}
\hc{\phi,\psi}=d\phi \otimes d\psi (\pi).
\eeq

\begin{defn}
\label{def_leaves}
A \emph{symplectic leaf} on a Poisson variety is an equivalence class of points, joined by a piecewise smooth Hamiltonian integral curve.
\end{defn}
Each symplectic leaf is an immersed Poisson subvariety bearing a symplectic structure, and any Poisson variety is a disjoint union of its symplectic leaves.

\begin{defn}
A \emph{Poisson-Lie group} is a Lie group $G$ equipped with a Poisson structure such that the group multiplication $m\colon G\times G\to G$ is a map of Poisson varieties.
\end{defn}

It is easy to show that a bivector field $\pi\in\Gamma(\La^2TG)$ defines a Poisson structure on a Lie group $G$ if and only if
\beq
\label{bivector_iff}
\pi(xy) = (d_x(\rho_y)\otimes d_x(\rho_y))\pi(x) + (d_y(\la_x)\otimes d_y(\la_x))\pi(y)
\eeq
where
$$
\la_x\colon G\to G, \quad g\mapsto xg \qquad\text{and}\qquad \rho_y\colon G\to G, \quad g\mapsto gy
$$
are respectively left and right translations on $G$. Thus, if $\pi(g)=0$ and $\Sc$ is a symplectic leaf on $G$, then $g\Sc$ is a symplectic leaf as well.

Let $G$ be a Poisson-Lie group with a bivector $\pi$ and a Lie algebra $\g=T_eG$. With the use of right translations on the tangent bundle $TG$, the bivector $\pi\in\Gamma(\La^2TG)$ defines a map $\wt\pi \colon G\to\La^2\g$ with a derivative $\de=d_e\wt\pi\colon\g\to\La^2\g$. This yields the following definition.
\begin{defn}A \emph{Lie bialgebra} is a Lie algebra $\g$ equipped with a \emph{cobracket} $\de\colon\g\to\La^2\g$ such that
\begin{itemize}
\item $\de^\ast\colon\La^2\g^\ast\to\g^\ast$ defines a Lie bracket on $\g^\ast$;
\item the \emph{cocycle condition}
$$
\de([a,b]) = (\ad_a\otimes1 + 1\otimes\ad_a)\de(b) - (\ad_b\otimes1 + 1\otimes\ad_b)\de(a)
$$
is satisfied.
\end{itemize}
\end{defn}

A classical theorem, due to Drinfeld~\cite{Dri86}, asserts that the functor $G\to Lie(G)$ between the category of connected, simply connected Poisson-Lie groups and the category of finite-dimensional Lie bialgebras is an equivalence of categories.

Let
$$
\si\colon\g\otimes\g\to\g\otimes\g, \quad a\otimes b \mapsto b\otimes a
$$
denote the permutation of tensor factors in $\g^{\otimes2}$. For any $r \in \g\otimes\g$, define the elements $r_{12}, r_{13}, r_{23} \in \g^{\otimes3}$ as follows
$$
r_{12}=r\otimes1, \quad r_{13}=(1\otimes\si)r_{12}, \quad r_{23}=1\otimes r.
$$
We call an element $r\in\g\otimes\g$ an $r$-\emph{matrix} if it satisfies the \emph{classical Yang-Baxter equation}
$$
[r_{12},r_{13}] + [r_{12},r_{23}] + [r_{13},r_{23}]=0.
$$
Let $r\in\g\otimes\g$ be an $r$-matrix whose symmetric part $r+\si(r)$ is invariant under the adjoint action of $\g$. Then, the Lie bialgebra $\g$ with a cobracket
$$
\de(a) = [1\otimes a + a\otimes1, r]
$$
is called \emph{quasitriangular}. Now, consider a Lie group $G$ whose Lie algebra $\g$ carries the structure of a quasitriangular Lie bialgebra with the $r$-matrix $r$. After trivializing the tangent bundle by right translations the bivector
$$
\pi(g) = \Ad_g(r) - r
$$
defines a Poisson-Lie structure on $G$.

For a detailed exposition of the Poisson-Lie theory we refer the reader to~\cite{CP94, ES98, KS98}.

\subsection{Poisson ind-groups}
\label{sub_Lie}

One of the motivations for this work was to study the Poisson geometry of the quasi-classical limit of Yangians in more details. This limit is naturally identified with the first congruence subgroup $G_1[[z^{-1}]]$ of the loop group $G[[z^{-1}]]$, i.e. the kernel of the evaluation map $G[[z^{-1}]] \to G$ at $z^{-1} = 0$. However, if one attempts to describe symplectic leaves on the whole $G_1[[z^{-1}]]$, they inevitably run into a problem of integrating vector fields on $\mathbb A^\infty$. One way to avoid that is to consider a ``smaller'' loop group, such as the group of analytic loops, and use analysis to study it. Here we want to stay within methods of algebra, so we consider an ind-subvariety $\M \in G_1[[z^{-1}]]$ and study its Poisson geometry using the theory developed in \cite{Wil13}.

Let us recall that an \emph{ind-variety} is a union of an increasing sequence of finite-dimensional varieties $X_n$ whose inclusions $X_n \hra X_{n+1}$ are closed embeddings. A \emph{ring of regular functions} $\C[X]$ of an ind-variety $X$ is an inverse limit
$$
\C[X] = \varprojlim \C[X_n]
$$
of the rings of regular functions on $X_n$. Given two ind-varieties $X$ and $Y$ with filtrations $X_n$ and $Y_n$ respectively, we say that $f\colon X \to Y$ is a \emph{regular map of ind-varieties} if for every $i\ge0$ there exists $n(i)\ge0$ such that $f(X_i) \subseq Y_{n(i)}$ and, moreover, $f|_{X_i}\colon X_i \to Y_{n(i)}$ is regular.

Now, \emph{Poisson ind-variety} is an ind-variety $X$ endowed with a continuous Poisson bracket $\C[X] \otimes \C[X] \to \C[X]$. An \emph{ind-group} is defined as an ind-variety $X$ with a regular group operation $X \times X \to X$. Combining the last two notions one gets the following definition.

\begin{defn}
A \emph{Poisson ind-group} $G$ is a Poisson ind-variety whose group operation $G \times G \to G$ is a regular map of Poisson ind-varieties.
\end{defn}

We refer the reader to \cite{Kum02, Wil13} for a detailed exposition of ind-groups and Poisson ind-groups.

\subsection{Poisson structure on matrix polynomials}

For the rest of this section let $G = GL_m$.

Consider the space $\Mat_m(\C[z^{-1}])$ of matrices over the ring of $\C$-valued polynomials in $z^{-1}$. Elements of this space can be treated as matrix-valued polynomials
$$
P(z) = \suml{k=0}{n} P_k z^{-k}, \qquad P_k \in \Mat_m(\C).
$$

We say that a matrix polynomial or power series in $z^{-1}$ is \emph{monic} if its constant term is the identity matrix $\one\in\Mat_m(\C)$. Let $\M\subs\Mat_m(\C[z^{-1}])$ be a set of monic matrix polynomials, and
$$
\M_n = \hc{\one + P_1 z^{-1} \spl P_n z^{-n}}, \qquad P_k\in\Mat_m(\C)
$$
be the subset of polynomials of degree at most $n$. Natural inclusions $\M_n \hra \M_{n+1}$ endow $\M$ with a structure of an ind-variety. Let $\C_1[[z^{-1}]]$ be the field of monic power series. Consider the map
$$
\C_1[[z^{-1}]] \times \M \to G_1[[z^{-1}]], \qquad \hr{g(z), P(z)} \mapsto g(z)P(z).
$$
Its image forms a subgroup in $G_1[[z^{-1}]]$ which we denote by $\wt\G$. Now, we define a subgroup $\G\subs\wt\G$ by
$$
\G = \hc{g(z)\in\wt\G \,|\, \det g(z)=1}.
$$
Elements of $\G$ are matrices $P(z)\in\M$ divided by power series $\sqrt[m]{\det P(z)}\in\C[[z^{-1}]]$. This allows us to endow $\G$ with the structure of an ind-variety in the same way is we did for $\M$. It is immediate that the group multiplication in $\G$ is a regular map of ind-varieties, thus $\G$ becomes an ind-group.

Now we endow $\M$ and $\G$ with Poisson structures. We want to define the Poisson bivector field, as in the finite-dimensional case, via an $r$-matrix which we construct with the use of Manin triples.

\begin{defn}
A \emph{Manin triple} is a triple of Lie algebras $(\a,\a_+,\a_-)$ where $\a = \a_ + \oplus \a_-$ is equipped with an invariant nondegenerate bilinear form $(\,,\,)$ such that
\begin{itemize}
\item $\a_+$ and $\a_-$ are isotropic subalgebras;
\item the form $(\,,\,)$ induces an isomorphism $\a_- \simeq \a_+^\ast$.
\end{itemize}
\end{defn}

The following proposition, due to Drinfeld, relates Manin triples and Lie bialgebras, see~\cite{ES98, KS98}.
\begin{prop}
Let $(\a,\a_+,\a_-)$ be a (possibly infinite-dimensional) Manin triple. Then the Lie bracket on $\a_-\simeq\a_+^\ast$ defines a map $\de\colon\a_+\to\La^2\a_+$ that turns $\a_+$ into a Lie bialgebra.
\end{prop}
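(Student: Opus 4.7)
The plan is to construct $\delta$ as the transpose of the bracket $[\,,\,]_-$ via the duality $\a_- \simeq \a_+^*$ induced by the form, and then verify the two bialgebra axioms by translating them into identities inside $\a$. I would define $\delta\colon \a_+ \to \a_+ \otimes \a_+$ uniquely by the formula
\[
(\delta(a), x \otimes y) = (a, [x, y])
\]
for $a \in \a_+$ and $x, y \in \a_-$, with the pairing on the left induced termwise from $(\,,\,)$. Antisymmetry of $[\,,\,]_-$ places $\delta(a)$ in $\La^2 \a_+$, and by construction $\delta^*\colon \La^2\a_+^* \to \a_+^*$ is identified with $[\,,\,]_-$ under $\a_+^* \simeq \a_-$, so the Jacobi identity in $\a_-$ is precisely the first bialgebra axiom.

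The main step is the cocycle condition. My plan is to first establish the dictionary $[a, x]_- = \ad_a^*(x)$ between the splitting and the coadjoint action of $\a_+$ on $\a_+^* \simeq \a_-$: for $a, b \in \a_+$ and $x \in \a_-$, invariance of $(\,,\,)$ and isotropy of $\a_+$ give
\[
([a, x]_-, b) = ([a, x], b) = -(x, [a, b]) = (\ad_a^* x, b).
\]
Pairing the cocycle identity with an arbitrary $x \otimes y \in \a_- \otimes \a_-$ then turns its left-hand side into $([a, b], [x, y])$ and, via the dictionary above, its right-hand side into a linear combination of pairings of the form $(b, [\ad_a^* x, y])$, $(b, [x, \ad_a^* y])$, and the $a \leftrightarrow b$ counterparts in $(a, \cdot)$. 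Applying invariance once more rewrites $([a, b], [x, y])$ as $-(b, [a, [x, y]])$, and the Jacobi identity in $\a$ applied to the triple $(a, x, y)$, together with the $\a_\pm$-decomposition of $[a, x]$ and $[a, y]$ and its $a \leftrightarrow b$ analogue, matches the two sides term by term.

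The hard part will be this last matching: each intermediate product $([u, v], w)$ has to be sorted by isotropy into the pieces that vanish and the pieces that survive, and the combinatorics of surviving terms must agree with what the cocycle demands. The calculation is completely formal once the dictionary is in hand, and because $\delta(a)$ is characterized by its pairings against finite elements of $\a_- \otimes \a_-$, no topological completion issues arise in the possibly infinite-dimensional setting.
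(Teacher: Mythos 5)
Your argument is correct, and it is the standard proof of Drinfeld's theorem; the paper itself offers no proof of this proposition, attributing it to Drinfeld and pointing to \cite{ES98, KS98}, where essentially your computation appears. For the record, the term-by-term matching you defer does close: pairing the cocycle identity with $x\otimes y\in\a_-\otimes\a_-$ and using $\br{[a,b],[x,y]}=\br{a,[b,[x,y]]}$ together with the Jacobi identity in $\a$, the terms involving $[b,x]_-$ and $[b,y]_-$ cancel against the right-hand side, and what remains reduces to the two identities $\br{a,[[b,x]_+,y]}=-\br{b,[x,[a,y]_-]}$ and $\br{a,[x,[b,y]_+]}=-\br{b,[[a,x]_-,y]}$, each of which follows from invariance of the form and the facts that $\a_\pm$ are isotropic subalgebras.

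The one place where you are too quick is the last sentence. For a genuinely infinite-dimensional Manin triple, the functional $x\otimes y\mapsto\br{a,[x,y]}$ on $\La^2\a_-$ is a priori represented only by an element of a completed tensor product $\a_+\widehat\otimes\,\a_+$, not of the algebraic $\La^2\a_+$: the form identifies $\a_-$ with a restricted dual of $\a_+$, and $(\a_-\otimes\a_-)^\ast$ is strictly larger than $\a_+\otimes\a_+$. So the claim that ``no topological completion issues arise'' is exactly the point that needs a finiteness argument, not a point that can be waved away. For the triple actually used in the paper, with $\a_+=z^{-1}\a_0[z^{-1}]$ and $\a_-=\a_0[[z]]$, the explicit cobracket~\eqref{cobracket} of each monomial is a finite sum, so the conclusion holds there; note that the paper itself must pass to $\a_+\widehat\otimes\,\a_+$ when it treats the $r$-matrix, which is precisely this phenomenon. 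Either add the hypothesis under which $\de$ lands in the uncompleted $\La^2\a_+$, or verify it directly for the Manin triple at hand.
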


Consider a Manin triple with $\a=\a_0((z))$, $\a_+=z^{-1}\a_0[z^{-1}]$, $\a_-=\a_0[[z]]$, $\a_0$ being a simple Lie algebra, and the bilinear form given by
\beq
\label{pairing}
\br{f(z),g(z)} = \res\limits_{z=0}\tr\br{f(z)g(z)}.
\eeq
It defines the \emph{dual Yangian} bialgebra structure \cite{ES98} on $\a_+$ with the cobracket
\beq
\label{cobracket}
\de(c z^{-n})=\suml{r=1}{n}\suml{i=1}{\dim\a_0}[x_i,c]z^{-r}\otimes x_i z^{r-n-1}
\eeq
where $(x_i)$ is an orthonormal basis of $\a_0$. Let us use two different variables $u$ and $v$ to distinguish between $\a_+=u^{-1}\a_0[u^{-1}]$ and $\a_-=\a_0[\![v]\!]$. Then $\a_+$ is a (pseudo) quasitriangular Lie bialgebra \cite{Dri86} with the $r$-matrix given by
\beq
\label{r}
r=\sums{n\ge0}\suml{i=1}{\dim\a_0} x_i u^{-n-1}\otimes x_i v^n = \frac\Omega{u-v},
\eeq
here $\Omega$ denotes the Casimir of $\a_0$, and $(u-v)^{-1}$ is expanded in the region $|u| > |v|$. Using the form~\eqref{pairing} we may consider the $r$-matrix~\eqref{r} as an element of a completed tensor product $\a_+ \widehat\otimes \a_+$, see \cite[Section 3.3]{Wil13}.

Let $f\M$, $f \in \C[[z^{-1}]]$ be the translate of the ind-variety $\M$ by an element $f$.

\begin{prop}
\label{prop_bivector}
The same formula
\beq
\label{bivector}
\pi(g) = \Ad_g(r) - r
\eeq
defines a Poisson bivector field on $f\M$ for any $f \in \C[[z^{-1}]]$.
\end{prop}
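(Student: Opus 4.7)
The plan is to verify three things: (a) $\pi(g) = \Ad_g(r) - r$ is well-defined as a bivector at each $g \in f\M$; (b) it satisfies the Jacobi identity; and (c) it is tangent to the ind-subvariety $f\M$ inside the ambient formal loop group $GL_m[[z^{-1}]]$. Items (a) and (b) mirror the finite-dimensional quasitriangular construction of Subsection~\ref{sub-fd-theory}, while (c) carries the substantive new content.

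For (a), in matrix form one has
\[
\Ad_g(r) - r \;=\; \frac{(g(u) \otimes g(v))\,\Om\,(g(u)^{-1} \otimes g(v)^{-1}) - \Om}{u - v}.
\]
The $GL_m$-invariance of the Casimir $\Om$ under the diagonal action forces the numerator to vanish identically when $u = v$; dividing by $u - v$ therefore yields a regular element of the completed tensor product $z^{-1}\gl_m[[z^{-1}]]\,\widehat\otimes\,z^{-1}\gl_m[[z^{-1}]]$, which via right-translation is the desired bivector at $g$. For (b), the rational $r$-matrix $\Om/(u-v)$ coming from the Manin triple $(\a, \a_+, \a_-)$ satisfies the classical Yang--Baxter equation and has $\g$-invariant symmetric part; the standard quasitriangular argument, adapted to the ind-variety formalism of~\cite[Section~3]{Wil13}, then yields both the cocycle property~\eqref{bivector_iff} and the Jacobi identity.

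Step (c) is the key point. Since $f \in \C[[z^{-1}]]$ is central, $\Ad_{fP}(r) = \Ad_P(r)$ and $T_{fP}(f\M) = f \cdot T_P\M$, so both the bivector and the tangent space are unchanged by multiplication by $f$ and I reduce to $g = P \in \M$. I would recast the bivector in its equivalent Sklyanin form: letting $L(z)$ denote the universal monic matrix polynomial on $\M$, the entries of
\[
\frac{\Om\,(L(u) \otimes L(v)) - (L(u) \otimes L(v))\,\Om}{u - v}
\]
give the Poisson brackets $\hc{L(u)_{ij},\, L(v)_{kl}}$. Since $\Om$ acts, modulo a central scalar, as the permutation operator in $\gl_m^{\otimes 2}$, the numerator equals $\br{L(v) \otimes L(u) - L(u) \otimes L(v)}\,\Om$, which manifestly vanishes at $u = v$. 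Writing $L(u) = \one + A(u)$ with $A(u) \in u^{-1}\gl_m[u^{-1}]$ and using the identity
\[
\frac{v^{-k} - u^{-k}}{u - v} \;=\; \sum_{\substack{a + b = k + 1 \\ a, b \ge 1}} u^{-a} v^{-b},
\]
the quotient is a polynomial in $u^{-1}, v^{-1}$ whose coefficients are polynomial in the entries of $A$. Hence the bracket of any two matrix coefficients of $P$ stays in the coordinate ring of $\M$, so $\M$, and therefore $f\M$, is a Poisson ind-subvariety of the ambient formal loop group. The main obstacle is precisely this polynomiality verification; once it is established, the rest is routine bookkeeping inside the Poisson ind-group framework of~\cite{Wil13}.
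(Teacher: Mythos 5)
Your argument is correct, but you establish the key tangency statement by a genuinely different route than the paper. The paper first notes that $\pi(g)=\Ad_g(r)-r$ satisfies the multiplicativity property~\eqref{bivector_iff} and that $\pi(f)=0$ for central $f$, and then exploits multiplicativity a second time: it exhibits generators of the semigroup $\M$, namely the elements $\exp(E_{i,i+1}z^{-n})$ and diagonal matrices of monic polynomials, and checks $\Ad_g(r)-r\in\La^2T\M_n$ only on those generators, the general case following from~\eqref{bivector_iff}. You instead verify tangency at an arbitrary $P\in\M_n$ by computing the Sklyanin form $[\Om/(u-v),\,L(u)\otimes L(v)]$ directly, showing the numerator factors as $\br{L(v)\otimes L(u)-L(u)\otimes L(v)}\Om$ and that the quotient is a polynomial in $u^{-1},v^{-1}$ with coefficients polynomial in the entries of $P$. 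This amounts to deriving the explicit bracket~\eqref{Poisson}--\eqref{Poisson-coord} that the paper only records afterwards, so your route is more self-contained (it replaces the paper's unprinted ``straightforward check'' on generators by an explicit identity) at the cost of being more computational. Two small points: make the bidegree bound at most $(n,n)$ explicit for the quadratic terms $A(v)\otimes A(u)-A(u)\otimes A(v)$ as well as for the linear ones, since that bound is precisely what places the bivector in $\La^2T\M_n$ rather than merely in the completed tensor square of $z^{-1}\gl_m[[z^{-1}]]$; and your part (b) on the Jacobi identity is not needed for this statement --- in the paper it is the content of the separate Proposition~\ref{prop_Jacobi}, proved by citing~\cite{Wil13}.
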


\begin{proof}
It is easy to see that expression~\eqref{bivector} satisfies property~\eqref{bivector_iff}. The $r$-matrix~\eqref{r} is invariant under conjugation by elements in $\C[[z^{-1}]]$. Therefore, $\pi(f)=0$ for any $f \in \C[[z^{-1}]]$ and we only need to show that $\Ad_g(r) - r \in \La^2T\M$ for any $g \in \M$. The semigroup $\M$ is generated by the elements
\beq
\label{elts}
\exp(E_{i,i+1}z^{-n}) \qquad\text{and}\qquad \suml{j=1}{n}E_{j,j}p_j(z)
\eeq
where $n\ge0$, $i=1 \sco m-1$, $E_{i,j}$ denote a matrix unit, and $p_j(z)$ are monic polynomials in $z^{-1}$. A straightforward check shows that $\Ad_g(r) - r \in \La^2T\M_n$ if $g$ is an element of the form~\eqref{elts} with $p_j(z)$ being a polynomial of degree less or equal to $n$.
\end{proof}

\begin{prop}
\label{prop_Jacobi}
Bivector~\eqref{bivector} defines a Poisson bracket on $\C[f\M]$ for any $f \in \C[[z^{-1}]]$.
\end{prop}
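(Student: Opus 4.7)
The bivector formulation $\{\phi,\psi\} = d\phi \otimes d\psi(\pi)$ automatically encodes the Leibniz rule, so the content of the proposition is that the bracket (i) lands in $\C[f\M]$ and is continuous, and (ii) satisfies the Jacobi identity. Since the $r$-matrix~\eqref{r} is invariant under the adjoint action of any element of $\C[[z^{-1}]]$, left translation by $f$ intertwines the bivectors on $\M$ and $f\M$. Consequently the bracket on $f\M$ is the pullback of the bracket on $\M$, and it suffices to prove both statements for $f = 1$.

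For (i), I would sharpen Proposition~\ref{prop_bivector} to the statement that for every $n$ there exists $N(n) \ge n$ such that $\pi|_{\M_n}$ is a section of $\La^2 T\M_{N(n)}$. Given $\phi, \psi \in \C[\M] = \varprojlim \C[\M_n]$, it then suffices to restrict both to $\M_{N(n)}$ in order to read off the values of $d\phi \otimes d\psi(\pi)$ on $\M_n$; the resulting family is compatible with the inclusions $\M_n \hra \M_{n+1}$, so it assembles into an element of $\C[\M]$. Continuity in the inverse-limit topology follows at once from this construction.

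For (ii), I would rely on the standard Poisson--Lie computation: for a bivector of the form $\pi(g) = \Ad_g(r) - r$, the Schouten square $[\pi,\pi]$ is the pushforward of $[r,r] = [r_{12},r_{13}] + [r_{12},r_{23}] + [r_{13},r_{23}]$ under the adjoint action, and thus vanishes provided $r$ satisfies the classical Yang--Baxter equation and its symmetric part is $\ad$-invariant (see~\cite{ES98, KS98}). Both conditions hold for $r = \Om/(u-v)$: it is a well-known solution of the CYBE, and its symmetric component is the Casimir $\Om$, which is $\ad$-invariant by definition.

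The main obstacle lies in transplanting this finite-dimensional argument to the present infinite-dimensional setting, since $r$ is merely an element of a completed tensor product $\a_+ \widehat\otimes \a_+$ and the CYBE therefore involves formally infinite sums. The resolution is that differentials of functions in $\C[\M_n]$ annihilate all but finitely many Fourier modes $x_i u^{-k-1}$, so only finitely many terms of $r$ contribute to any pairing $\ha{[r,r], d\phi_1 \otimes d\phi_2 \otimes d\phi_3}$. This reduces every computation to a finite sum, to which the finite-dimensional derivation of Jacobi from the CYBE applies verbatim. The requisite machinery for making these manipulations rigorous in the ind-group setting is developed in~\cite[Section 3.3]{Wil13}, which I would invoke rather than reproduce.
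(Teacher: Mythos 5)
Your proposal is correct and follows essentially the same route as the paper: the paper likewise derives well-definedness and continuity of the bracket from Proposition~\ref{prop_bivector} together with \cite[Proposition 3.7]{Wil13}, and deduces the Jacobi identity from the classical Yang--Baxter equation via \cite[Proposition 3.13]{Wil13}. You have merely unpacked the content of those two citations (the filtration-compatibility argument and the finiteness of the pairings against $[r,r]$ in the completed tensor product), which the paper delegates to \cite{Wil13}.
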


\begin{proof}
By Proposition~\ref{prop_bivector} and \cite[Proposition 3.7]{Wil13}, the bivector~\eqref{bivector} defines a continuous skew-symmetric bracket on $\C[f\M]$ satisfying the Leibniz rule. That this bracket satisfies the Jacobi identity follows from the fact that $r$ is a solution of the classical Yang-Baxter equation, see~\cite[Proposition 3.13]{Wil13}.
\end{proof}

In other words, $\wt\G$ is a family of isomorphic Poisson ind-varieties labelled by the elements of $\C_1[[z^{-1}]] / \C_1[z^{-1}]$, with isomorphisms given by translations by elements of $\C[[z^{-1}]]$.

\begin{prop}[see \cite{RSTS93}]
\label{prop_conj_inv}
If $\phi, \psi \in \C[\wt\G]$ are conjugation invariant functions, then $\hc{\phi,\psi}=0$.
\end{prop}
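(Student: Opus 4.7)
The plan is to adapt the classical Poisson-Lie argument (cf.\ the finite-dimensional case recalled in section~\ref{sub-fd-theory}) to the ind-group setting of $\wt\G$. Using right translations to trivialize the (co)tangent bundle, I identify $d\phi(g)$ with a right-invariant gradient $\nabla^R\phi(g)$, which via the residue pairing~\eqref{pairing} lives in a suitable completion of $\a_+$. In this trivialization the tangent vectors $Xg$ and $gX$ at $g$ correspond to $X$ and $\Ad_g X$ respectively, and the Poisson bracket reads
$$
\hc{\phi,\psi}(g) = \ba{\nabla^R\phi(g)\otimes\nabla^R\psi(g),\,\Ad_g(r)-r}.
$$

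The first step is to extract from the conjugation invariance of $\phi$ the identity
$$
\ba{\nabla^R\phi(g),\,X-\Ad_g X}=0 \qquad \text{for all } X,
$$
obtained by differentiating $\phi(e^{tX}ge^{-tX})=\phi(g)$ at $t=0$ and using the trivialization dictionary above. Equivalently, $\nabla^R\phi(g)$ is $\Ad_g^\ast$-invariant, and the same holds for $\psi$.

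The second step is then a one-line calculation: by $\Ad_g^\ast$-invariance of the gradients,
$$
\ba{\nabla^R\phi\otimes\nabla^R\psi,\,\Ad_g(r)} = \ba{(\Ad_g^\ast)^{-1}\nabla^R\phi\otimes(\Ad_g^\ast)^{-1}\nabla^R\psi,\,r} = \ba{\nabla^R\phi\otimes\nabla^R\psi,\,r},
$$
so $\hc{\phi,\psi}(g)=\ba{\nabla^R\phi\otimes\nabla^R\psi,\,\Ad_g(r)-r}=0$. No use of the classical Yang-Baxter equation or of a decomposition of $r$ into symmetric and antisymmetric parts is needed: the entire $r$-matrix cancels.

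The main technical obstacle is making these pairings rigorous in the completed tensor product $\a_+\widehat\otimes\a_+$ where $r=\Omega/(u-v)$ lives, as invoked in the proof of Proposition~\ref{prop_bivector}. Since $\phi$ is regular on the ind-variety $\wt\G$, it factors through $\M_n$ for some $n$, so $\nabla^R\phi(g)$ has only finitely many nonzero components in the expansion along the basis $x_i u^{-k-1}$ of $\a_+$; consequently only finitely many terms of the series for $r$ contribute to each pairing, and the manipulations above are literally justified term-by-term. The continuity of the bracket (Proposition~\ref{prop_Jacobi}) then allows the conclusion to be stated on all of $\C[\wt\G]$.
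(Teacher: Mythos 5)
The paper does not actually prove this proposition --- it is stated with a citation to \cite{RSTS93} and used as a black box --- so there is nothing internal to compare against; what you have written is the standard involution argument from that reference, and it is essentially correct: conjugation invariance makes the right gradients $\Ad_g^\ast$-fixed, so the pairing against $\Ad_g(r)-r$ vanishes wholesale, with no need for the Yang--Baxter equation (which is only responsible for the Jacobi identity, as in Proposition~\ref{prop_Jacobi}). Two points deserve tightening before this could stand as a proof in the setting of the paper. First, a bookkeeping slip: under the residue pairing \eqref{pairing} the right gradient of a regular function on $\M_n$ is naturally an element of $\Tc^+_{n-1}\subs\Mat_m[z]$ (the dual of the tangent space $\Tc^-_n$), not a finite sum in the basis $x_iu^{-k-1}$ of $\a_+$; the finiteness you need is still there, but it lives on the other side of the pairing. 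Second, and more substantively, your identity $\ha{\nabla^R\phi(g),X-\Ad_gX}=0$ is obtained by differentiating conjugation by one-parameter subgroups, and to convert $\ha{\nabla^R\phi\otimes\nabla^R\psi,\Ad_g(r)}$ into $\ha{\nabla^R\phi\otimes\nabla^R\psi,r}$ you must apply it with $X$ ranging over the directions occupied by \emph{both} legs of $r$ in \eqref{r}, i.e.\ over $\a_+$ and over $\a_0[[v]]$. Conjugation by $\exp(tX)$ for $X$ with nonnegative powers of $z$ does not preserve $\wt\G$, so ``conjugation invariant function on $\wt\G$'' does not by itself hand you the identity in those directions; you need to say that the functions in question (spectral invariants such as the coefficients of $\det$) satisfy the infinitesimal identity $\ha{d\phi(g),Xg-gX}=0$ for all $X\in\Mat_m((z^{-1}))$ for which $Xg-gX$ is tangent to $\wt\G$, which is how \cite{RSTS93} runs the argument. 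With that caveat made explicit, your proof is complete and is the intended one.
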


\begin{cor}
\begin{enumerate}
\item $\M_n$ are Poisson subvarieties of a Poisson ind-variety $\M$;
\item $\G$ is a Poisson ind-group.
\end{enumerate}
\end{cor}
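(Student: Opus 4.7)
The plan is to address the two assertions separately. For (1), I would show that the ideal $I(\M_n) \subs \C[\M]$ cut out by the vanishing of the entries of $P_k$ for $k > n$ is a Poisson ideal. Expanding the bivector~\eqref{bivector} using the explicit $r$-matrix~\eqref{r} gives (since the trace part of $\Omega$ commutes with $T_1T_2$) the entry-wise formula
$$
\{g_{ab}(z), g_{cd}(w)\} = \frac{g_{cb}(z) g_{ad}(w) - g_{ad}(z) g_{cb}(w)}{z - w}.
$$
The numerator vanishes at $z = w$ by commutativity of matrix entries, so the right-hand side is actually polynomial in $z^{-1}, w^{-1}$. Using the elementary identity
$$
\frac{z^{-k} w^{-l} - z^{-l} w^{-k}}{z - w} = -\sum_{j = 0}^{k - l - 1} z^{j - k} w^{-l - 1 - j}, \qquad k > l \ge 0,
$$
one bounds the degrees: for $g \in \M_n$ each $g_{ab}$ has degree at most $n$ in $z^{-1}$, so the bracket has degree at most $n$ in each of $z^{-1}, w^{-1}$. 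Extracting coefficients, $\{g_{ab, k}, g_{cd, l}\}|_{\M_n} = 0$ whenever $k > n$ or $l > n$, and the Leibniz rule extends this to all of $I(\M_n)$. Alternatively, one could combine the observation in the proof of Proposition~\ref{prop_bivector} that $\pi(g) \in \La^2 T \M_n$ for the generators~\eqref{elts} of degree $\le n$ with the cocycle identity~\eqref{bivector_iff} to cover arbitrary $g \in \M_n$ via the semigroup factorization.

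For (2), three points must be verified: that $\wt\G$ is itself a Poisson ind-group, that $\G \subs \wt\G$ is a Poisson sub-ind-variety, and that $\G$ is closed under the group operation. The first is automatic, since the identity $\Ad_{xy} r - r = \Ad_x(\Ad_y r - r) + (\Ad_x r - r)$ is exactly the cocycle condition~\eqref{bivector_iff} for $\pi(g) = \Ad_g r - r$, which is equivalent to multiplication being a Poisson map. Closure $\G \cdot \G \subs \G$ is the multiplicativity of the determinant: $\det(g_1 g_2) = \det g_1 \cdot \det g_2 = 1$.

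The principal obstacle, and the most delicate step, is the second point: showing that the Poisson structure on $\wt\G$ restricts to $\G$. Here I would exploit that the $r$-matrix~\eqref{r} is trace-free in each tensor factor. Since $\a_0$ is simple, its orthonormal basis $(x_i)$ consists of trace-free matrices, so every summand $x_i u^{-k-1} \otimes x_i v^k$ of $r$—and hence the completed $r \in \a_+ \widehat\otimes \a_+$—is trace-free in each factor. Because conjugation preserves the trace, $\pi(g) = \Ad_g r - r$ is trace-free in each factor for every $g \in \wt\G$. Equivalently, $\det g(z)$ is Poisson-central: in the right-invariant trivialization $T_g \wt\G \simeq \a_+$ the differential $d(\det g(z))(g)$ is proportional to the trace, so pairing $\pi(g)$ against it vanishes in either slot. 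Consequently the ideal $I(\G) \subs \C[\wt\G]$ generated by the non-constant coefficients of $\det g(z) - 1$ is a Poisson ideal, so $\G$ is a Poisson sub-ind-variety of $\wt\G$. Combined with the first and third observations, this yields that $\G$ is a Poisson ind-group.
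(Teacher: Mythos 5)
Your proposal is correct and proves what is needed, but it is considerably more self-contained than the paper's own proof, which is a three-line citation: part (1) is deduced from Proposition~\ref{prop_Jacobi} together with the observation, already made inside the proof of Proposition~\ref{prop_bivector}, that $\Ad_g(r)-r\in\La^2T\M_n$ for the generators~\eqref{elts} of degree at most $n$ (your ``alternative'' route is thus exactly the paper's route, while your primary route --- extracting coefficients from the entry-wise bracket, i.e.\ re-deriving~\eqref{Poisson-coord} and noting that every summand of $\{t_{ij}^{(r)},t_{kl}^{(s)}\}$ with $r>n$ contains a factor $t^{(q)}$ with $q>n$ --- is an equally valid direct verification); part (2) is deduced in the paper from Proposition~\ref{prop_conj_inv}, which is quoted from~\cite{RSTS93} and used to assert that $\det$ is constant along symplectic leaves, whereas you prove this from scratch by showing $\det$ is a Casimir via trace-freeness of $\pi(g)$ in each tensor factor. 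Your argument buys independence from the external reference and is, if anything, more transparent than the paper's appeal to involutivity of conjugation-invariant functions (which, as literally stated, gives $\{\det,\psi\}=0$ only for conjugation-invariant $\psi$, not for all $\psi$). One small imprecision: for $G=GL_m$ the relevant $\a_0$ is $\gl_m$, which is not simple, and its orthonormal basis contains a multiple of the identity, so $r$ itself is \emph{not} trace-free in each factor; your conclusion survives because the trace component of $r$ is $\Ad$-invariant and therefore cancels in $\Ad_g(r)-r$, but you should phrase the argument that way rather than via simplicity of $\a_0$. Also note that the paper deliberately treats $\wt\G$ only as a family of Poisson ind-varieties $f\M$ rather than as a single Poisson ind-group, so your claim that $\wt\G$ is ``automatically'' a Poisson ind-group is stronger than what is needed or asserted; what matters is only that the bivector on each translate $f\M$ is tangent to $\G\cap f\M$ and that multiplication is regular and Poisson, both of which your argument does establish.
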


\begin{proof}
Part 1 follows from Proposition~\ref{prop_Jacobi} and the proof of Proposition~\ref{prop_bivector}. In turn, proposition~\ref{prop_conj_inv} implies that the determinant of a matrix $P\in f\M$ is constant along the symplectic leaf containing $P$. Therefore, Poisson bivector~\eqref{bivector} restricts from the family $\wt\G$ to $\G$ turning the latter into a Poisson-ind group.
\end{proof}

Let us now describe the Poisson structure on $\M$ explicitly. By $t_{ij}^{(k)}\in C^{\infty}(\M)$ we denote a function that evaluates to the $(ij)$-entry of the $k$-th coefficient $P_k$ on a monic matrix polynomial $P(z)$. Consider a generating series of functions
$$
T(u) = T^{(0)} + T^{(1)}u^{-1} + T^{(2)}u^{-2} + \dots \qquad\text{where}\qquad T^{(k)} = \hr{t_{ij}^{(k)}}_{i,j=1}^{m}.
$$
It is easy to check \cite{KWWY14} that the Poisson bracket on $\M$ can be written in the Leningrad notation as follows
\beq
\label{Poisson}
\Bc{T(u) \mathop{\otimes}_{\stackrel{\displaystyle{,}}{\phantom{-}}} T(v)} = \hs{\frac{\Omega}{u-v},\;T(u) \otimes T(v)}
\eeq
where $\Omega = \suml{i,j=1}n E_{ij}\otimes E_{ji}$ is the Casimir element in $\gl_n$. Equivalently, formula~\eqref{Poisson} reads as
$$
\hc{t_{ij}(u),t_{kl}(v)} = \frac{1}{u-v}\hr{t_{kj}(u)t_{il}(v) - t_{kj}(v)t_{il}(u)},
$$
where $t_{ij}(u) = t_{ij}^{(0)} + t_{ij}^{(1)}u^{-1} + t_{ij}^{(2)}u^{-2} + \dots,$ or even more explicitly,
\beq
\label{Poisson-coord}
\hc{t_{ij}^{(r)}, t_{kl}^{(s)}} = \suml{q=\max(r,s)}{r+s-1} t_{kj}^{(r+s-q-1)} t_{il}^{(q)} - t_{kj}^{(q)}t_{il}^{(r+s-q-1)}.
\eeq
Thus, functions $t_{ij}^{(r)}$ for $r>n$ generate the defining Poisson ideal for subvarieties $\M_n$. Let $\M'_n\subs\M$ be a subvariety of matrix polynomials in $z^{-1}$ of degree exactly $n$. With the above Poisson structure $\M$ becomes a disjoint union of finite dimensional Poisson subvarieties $\M'_n$.

\begin{remark}
The Poisson bracket~\eqref{Poisson} is defined in such a way that ind-varieties $\M$ and $\G$ can be treated as classical versions of Yangians $\Y(\gl_m)$ and $\Y(\sl_m)$ respectively, see also \cite[Theorem 3.9]{KWWY14}.
\end{remark}

\section{Symplectic leaves}

In this section we classify the symplectic leaves on Poisson ind-varieties $\M$ and $\G$.

\subsection{Smith normal form}
\label{section-SNF}

\begin{theorem}
\label{th-SNF}
Let $R$ be a principal ideal domain. Then any matrix $M\in\Mat_m(R)$ can be written in the form
$$
M = A D B \qquad\text{where}\qquad A,B \in GL_m(R), \quad D=diag(d_1 \sco d_m)
$$
and $d_i$ is divisible by $d_{i+1}$ for $i=1\sco m-1$.
\end{theorem}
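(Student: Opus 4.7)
The plan is to reproduce the classical Smith normal form algorithm, adapted with some care to a general principal ideal domain rather than a Euclidean one. I would reduce $M$ to block-diagonal form via elementary row and column operations (which correspond to left and right multiplication by elements of $GL_m(R)$), extract one diagonal entry, and then induct on $m$.

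The basic ingredient is a $2\times 2$ reduction via B\'ezout's identity: given $a,b\in R$ with $d=\gcd(a,b)$ and $d=ua+vb$, the matrix $\hr{\begin{smallmatrix} u & v \\ -b/d & a/d \end{smallmatrix}}$ lies in $SL_2(R)$ and sends $(a,b)^T$ to $(d,0)^T$. Embedding such a matrix into $GL_m(R)$ via rows $1$ and $i$ lets me replace $m_{11}$ by $\gcd(m_{11},m_{i1})$ while simultaneously killing $m_{i1}$, and symmetrically for columns. Iterating over the first column clears everything below $m_{11}$ and places the gcd of the original first-column entries at $(1,1)$; analogous column sweeps clear the first row.

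Alternating row and column sweeps may repeatedly shrink the $(1,1)$ entry, but each genuine replacement is by a proper divisor of the previous one, so the principal ideals generated by the $(1,1)$ entry form a strictly ascending chain $(m_{11})\subsetneq(m_{11}')\subsetneq\cdots$. Since every PID is Noetherian, the inner loop terminates with some element $d$ at $(1,1)$ and zeros everywhere else in the first row and column. If some remaining entry $m_{ij}$ with $i,j>1$ fails to be a multiple of $d$, then adding row $i$ to row $1$ exposes $m_{ij}$ in the first row, and restarting the inner sweep produces a proper divisor of $d$ at position $(1,1)$. The same Noetherian argument terminates this outer loop, at which point $d$ divides every entry of the remaining $(m-1)\times(m-1)$ block $M'$.

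Induction on $m$ applied to $M'$ yields a factorization $M'=A'D'B'$ with $D'=\mathrm{diag}(e_1,\ldots,e_{m-1})$, $e_{i+1}\mid e_i$, and $A',B'\in GL_{m-1}(R)$. Assembling the pieces and conjugating by a permutation matrix (which lies in $GL_m(R)$) to move $d$ into the last diagonal position yields $\mathrm{diag}(e_1,\ldots,e_{m-1},d)$ of the desired form, the divisibility $d\mid e_{m-1}$ being inherited from $d$ dividing every entry of $M'$. The main subtlety throughout is the termination of the algorithm for a non-Euclidean PID: unlike over $\Z$ or $k[x]$, one cannot appeal to a size function that strictly decreases under each operation, so both the inner and outer loops must be terminated by the Noetherian property of $R$.
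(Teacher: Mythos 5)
Your proof is correct, and it takes a genuinely different route from the paper's. The paper sketches the classical Euclidean-algorithm proof: choose an entry of smallest norm, move it by permutations to the last row and column, repeatedly subtract multiples of it to shrink the norms of the other entries in that row and column, and terminate because the norm strictly decreases; the step handling an entry not divisible by the corner element and the recursion on the complementary block are then the same as yours (except that the paper builds the diagonal from the bottom-right corner, where you place $d$ there by a final permutation), and the paper refers to Gohberg--Lancaster--Rodman for the complete argument when $R=\C[z]$. That argument requires a Euclidean function, so as written it only covers Euclidean domains --- which is all the paper ever needs, since the relevant ring is $\C[z]$. Your substitution of the B\'ezout matrix $\left(\begin{smallmatrix} u & v \\ -b/d & a/d \end{smallmatrix}\right)\in SL_2(R)$ for division with remainder, together with termination via the ascending chain condition on the ideals generated by the $(1,1)$ entry, is exactly what is needed to prove the theorem at the stated level of generality (an arbitrary PID), at the price of losing the effective step count that a norm provides. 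One point you should make explicit in the inner loop: when $m_{11}$ already divides the entry being cleared, take the B\'ezout coefficients to be $u=1$, $v=0$, so that the operation is an honest elementary one and leaves the already-cleared column untouched; with an arbitrary B\'ezout pair the ideal $(m_{11})$ would not grow, the sweep could keep perturbing the cleared column, and the Noetherian argument alone would not force the loop to stop.
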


The theorem is standard, idea of the proof is as follows. Let $M_{i,j}$ be an entry of $M$ with the smallest norm. We first multiply $M$ by transposition matrices to place $M_{i,j}$ in the last row and the last column. We then multiply $M$ by elementary matrices to reduce norms of elements of the last row and column by subtracting multiples of $M_{n,n}$. This procedure is nothing but a Euclidian algorithm which we perform several times until we get a matrix whose only nonzero entry of the last row and the last column is $M_{n,n}$. If it happened so, that there exists an entry not divisible by $M_{n,n}$, we use an elementary row/column operation again to add this entry to the last row/column. We repeat the Euclidian algorithm until $d_n = M_{n,n}$ is the only nonzero entry in the last row and column and divides all other entries of the matrix $M$. Then we run the same process with the smaller matrix that we obtain by deleting the last row and the last column of $M$. We refer the reader to~\cite{GLR82} for a complete proof in case $R = \C[z]$.

It is clear from the proof that the elements $d_i$ are unique up to multiplication by a unit in $R$, and the product $d_{n-r+1} \dots d_n$ equals the greatest common divisor of all $r \times r$ minors of the matrix $M$. Thus $GL_m(R)$ double cosets in $\Mat_m(R)$ are precisely the Smith normal forms in $\Mat_m(R)$. The matrix $D$ is called the \emph{Smith normal form} of $M$. In case $R=\C[z]$ elements $d_i$ are called the \emph{invariant polynomials} of the matrix~$M$. Now we can formulate the main result of this section.

\subsection{Classification of symplectic leaves}

Let $\Pc_n \in \Mat_m[z]$ denote the set of monic matrix polynomials of degree $n$ (where monic means that the coefficient in front of $z^n$ is the identity matrix). Then we induce a structure of a Poisson variety on $\Pc_n$ using the isomorphism
\beq
\label{MP}
\M_n \simeq \Pc_n, \qquad P(z) \mapsto z^nP(z).
\eeq

\begin{theorem}
\label{th-leaves}
Symplectic leaves on the variety $\M_n$ are the varieties of monic matrix polynomials $P\in\M_n$ of degree~$n$ in $z^{-1}$ such that the corresponding polynomials $z^nP \in \Pc_n$ have a given Smith normal form.
\end{theorem}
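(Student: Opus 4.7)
The plan is to prove the characterization in two steps: first, that the Smith normal form of $z^n P(z)$ is constant along symplectic leaves of $\M_n$ (invariance), and second, that each Smith normal form stratum is a single leaf (transitivity).

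For invariance, the key observation is that the $GL_m(\C[z]) \times GL_m(\C[z])$-double coset of $z^n P(z)$ in $\Mat_m(\C[z])$ --- which by Theorem~\ref{th-SNF} determines the Smith normal form --- should be an intrinsic Poisson invariant. I would realize this in the Poisson-Lie framework via the Birkhoff decomposition $G((z^{-1})) = G_1[[z^{-1}]] \cdot G[z]$: this induces a dressing action of $G[z]$ on $\wt\G$ by $A(z) \cdot P(z) := P'(z)$, where $A(z) P(z) = P'(z) A'(z)$ is the Birkhoff factorization with $P' \in \wt\G$, $A' \in G[z]$. An analogous right dressing action gives a bi-dressing of $G[z]\times G[z]$. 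By the standard Poisson-Lie dictionary adapted to the ind-group setting (\cite{Wil13}), orbits of this action are the symplectic leaves of $\wt\G$. Restricting to $\M \subset \wt\G$ and noting that the orbit of $P$ intersects $\M$ in exactly those $P'(z) \in \M$ with $z^n P'(z)$ in the same $GL_m(\C[z]) \times GL_m(\C[z])$-double coset as $z^n P(z)$, we obtain invariance of the Smith normal form.

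For transitivity, fix a Smith normal form $D(z) = \mathrm{diag}(d_1,\ldots,d_m)$ with $\sum\deg d_i = mn$, and set
\[
\Sc_D = \hc{P(z) \in \M_n \;\big|\; z^n P(z)\text{ has Smith normal form }D(z),\ \deg_{z^{-1}} P = n}.
\]
I would first show $\Sc_D$ is connected, via its description as the image of an open connected subset of $GL_m(\C[z]) \times GL_m(\C[z])$ (through $(A,B)\mapsto A D(z) B$) intersected with the locus of $z^n \cdot (\text{monic polynomial of degree } n \text{ in } z^{-1})$. Next, I would compute the rank of the Poisson bivector $\pi(P) = \Ad_P(r) - r$ at a convenient representative of $\Sc_D$ --- e.g., a block-diagonal representative where the adjoint action on $r = \Omega/(u-v)$ is easy to track --- and verify that this rank equals $\dim \Sc_D$. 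Combined with invariance, this forces each stratum to be exactly one leaf.

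The main obstacle is the Poisson-Lie dressing identification in step one: on the ind-group $\wt\G$, Birkhoff factorization may fail outside a big cell, and one must verify that the monicity and degree conditions defining $\M_n \subset \wt\G$ are preserved by the orbit action. An alternative, stratum-wise route is to prove invariance directly from the Poisson bracket \eqref{Poisson-coord}: show that Hamiltonian flows preserve the GCDs of $k\times k$ minors of $z^n P$ for each $k$, using the fact (derivable from the RTT form~\eqref{Poisson}) that the $k$-minors satisfy an analogous bracket on $\Lambda^k\C^m$ to which Proposition~\ref{prop_conj_inv} applies; since GCDs are locally constant on Smith normal form strata, this suffices. The rank computation in step two is also nontrivial, requiring careful analysis of the adjoint action of the base-point representative on $\Omega/(u-v)$.
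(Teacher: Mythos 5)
Your overall framing --- that the Smith normal form strata are the intersections of $GL_m[z]$ double cosets with $\M_n$ and should be the dressing orbits, hence the leaves --- is exactly the paper's point of view. But both halves of your argument have gaps that the paper's proof is specifically designed to avoid. For invariance, you cannot simply cite ``the standard Poisson-Lie dictionary'': $\M$ is not a Poisson-Lie group with dual $G[z]$ in any setting where the dressing-orbit theorem has been established (the paper says this explicitly), so the identification of leaves with bi-dressing orbits is precisely what has to be \emph{proved}, not imported. The paper does this infinitesimally: it computes the Hamiltonian vector fields from the bracket~\eqref{Poisson-coord} in closed form, $X_A(P) = \br{PA}_+P - P\br{AP}_+$ for $A\in\Tc^+_{n-1}$ (Proposition~\ref{prop_dressing}, by induction on $n$), and then shows this space equals $\hc{BP-PC \mid B,C\in\Tc^+_{n-1}}$, the tangent space to the double coset, by solving $B=(PA)_+$, $C=(AP)_+$ for $A$ using monicity of $P$ (Proposition~\ref{prop-cosets}). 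Your alternative route via GCDs of $k\times k$ minors is also shakier than it looks: a GCD is not a regular function, Proposition~\ref{prop_conj_inv} only says that conjugation-invariant functions Poisson-commute with each other (not that the minors' GCDs are Casimirs), and ``locally constant on strata'' presupposes that flows stay in strata, which is what you are trying to show.

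The more serious gap is in your transitivity step. Connectedness of $\Sc_D$ together with the rank of $\pi$ equalling $\dim\Sc_D$ at \emph{one} convenient representative only shows that the leaf through that representative is open in $\Sc_D$; it does not rule out $\Sc_D$ breaking into several open leaves plus lower-dimensional ones (the bi-translation action is not by Poisson maps, so you cannot propagate the rank computation from a base point to the whole stratum). What is actually needed --- and what the paper proves --- is that $T_P\Sc = T_P\br{GL_m[z]P\,GL_m[z]\cap\M_n}$ at \emph{every} point $P$, combined with: (i) transversality and irreducibility of the double coset intersection, which the paper gets from Richardson's theorem on intersections of double cosets, and (ii) the fact that the infinitesimal generators $BP-PC$ genuinely exponentiate to curves inside the double coset (checked on the generators $E_{i,j}z^k$ of $\gl_m[z]$), so that the leaf, defined via piecewise Hamiltonian curves, fills out the whole irreducible intersection. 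To repair your proof you would essentially have to reproduce the pointwise tangent-space identification, at which point the single-representative rank computation becomes superfluous.
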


The rest of this section is devoted to the proof of Theorem~\ref{th-leaves}. First, we recall an analogous result for finite-dimensional Poisson-Lie groups. Let $G$ be a (finite-dimensional) Poisson-Lie group, $G^*$ its dual, and $D$ its double (see~\cite{CP94}, \cite{ES98}, or~\cite{KS98}). Then, the following theorem holds.

\begin{theorem}
\label{th-fd-leaves}
\begin{enumerate}
\item The symplectic leaves of $G$ are the dressing orbits of $G^*$ on $G$;
\item Equivalently, the symplectic leaves of $G$ are the connected components of its intersections with the double cosets of $G^*$ in $D$.
\end{enumerate}
\end{theorem}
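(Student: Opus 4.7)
The plan is to prove (1) and (2) in parallel by constructing an explicit left action of $G^*$ on $G$—the \emph{dressing action}—whose orbits coincide both with symplectic leaves (giving (1)) and with connected components of intersections of $G$ with double $G^*$-cosets in $D$ (giving (2)). The Manin triple $(\dgt,\g,\g^*)$ from Subsection~\ref{sub-fd-theory} exponentiates to the Drinfeld double $D$, and in a neighbourhood $V$ of the identity in $D$ every element admits a unique factorization $d = g\cdot u$ with $g\in G$, $u\in G^*$. For $u\in G^*$ and $g\in G$ small enough that $ug\in V$, one writes $ug = g^u\cdot u^g$ with $g^u\in G$, $u^g\in G^*$, and declares $u\cdot g := g^u$. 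A direct check using the uniqueness of the factorization shows this is a (possibly only local) left action of $G^*$ on $G$.

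For (1), I would compute the infinitesimal generator of the dressing action. Differentiating $t\mapsto \exp(t\xi)\cdot g$ for $\xi\in\g^*$ at $t=0$ and unwinding the factorization, using $\pi(g) = \Ad_g(r) - r$ together with the Manin triple pairing identifying $\g^*$ with the isotropic complement of $\g$ in $\dgt$, one obtains
$$
\xi^\flat(g) \;=\; \pi^\#_g\br{R^*_{g^{-1}}\xi},
$$
where $\pi^\#_g\colon T^*_g G \to T_g G$ is contraction with $\pi(g)$ and $R^*_{g^{-1}}\xi$ is the right-invariant $1$-form on $G$ extending $\xi$. As $\xi$ ranges over $\g^*$ the covector $R^*_{g^{-1}}\xi$ ranges over all of $T^*_g G$, so the dressing vector fields at $g$ span precisely $\Img\pi^\#_g$, which by Definition~\ref{def_leaves} is the tangent space at $g$ to the symplectic leaf through $g$. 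Both distributions are involutive (the symplectic one by the Jacobi identity for $\pi$; the dressing one because $\xi\mapsto\xi^\flat$ is a Lie algebra antihomomorphism from $\g^*$), and their maximal connected integral submanifolds therefore coincide.

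For (2), I would translate dressing orbits back into $D$. By construction $u\cdot g = g^u = u\,g\,(u^g)^{-1}$, so every dressing translate of $g$ lies in $G\cap G^* g G^*$ (intersection taken inside $D$). Conversely, given $g'\in G\cap G^* g G^*$, write $g' = u_1\,g\,u_2^{-1}$ with $u_1,u_2\in G^*$; then the identity $u_1 g = g'\,u_2$ is already in factorized form and exhibits $g' = u_1\cdot g$. Thus the dressing $G^*$-orbit through $g$ equals $G\cap G^* g G^*$ up to connected components, and combined with (1) this is the statement of (2).

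The main obstacle will be the coordinate-free identity in the second paragraph: verifying it requires careful tracking of signs, left versus right translations, and the identification $\g^*\simeq T_e G^*$ coming from the Manin triple pairing—essentially the calculation that turns the abstract factorization in $D$ into the explicit $r$-matrix bivector $\Ad_g(r)-r$. A secondary subtlety is that the factorization $D = G\cdot G^*$ holds only on an open neighbourhood of the identity, so the dressing action does not in general integrate globally; this is precisely what the ``connected components'' caveat in (2) absorbs, and also what makes the statement about symplectic leaves (rather than global orbits) the natural one. The full details in the finite-dimensional setting are in \cite{CP94, ES98, KS98}.
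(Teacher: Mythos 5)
The paper does not prove Theorem~\ref{th-fd-leaves} at all: it is quoted as a classical result with references to \cite{LW90, STS85} for the finite-dimensional case and to \cite{Wil13} for the Kac--Moody generalization, so there is no internal proof to compare against. Your outline reproduces the standard argument of Semenov-Tian-Shansky and Lu--Weinstein from those references, and it is the correct one: define the local dressing action by factorization in $D$, identify its infinitesimal generators with the Hamiltonian vector fields of (right-)invariant $1$-forms so that the dressing distribution equals $\Img\pi^{\#}$, and then use $u\cdot g = u\,g\,(u^g)^{-1}$ to convert orbits of the local action into connected components of $G\cap G^*gG^*$. Two caveats. First, your appeal to $\pi(g)=\Ad_g(r)-r$ is not available in the stated generality: the theorem concerns an arbitrary finite-dimensional Poisson-Lie group, which need not be quasitriangular (and the Manin triple you invoke actually appears later in the paper, in the construction of the rational structure, not in Subsection~\ref{sub-fd-theory}). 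The computation of the infinitesimal dressing field should instead be run directly through the Manin triple $(\dgt,\g,\g^*)$ of the double, using the invariant pairing to identify $\g^*$ with the isotropic complement of $\g$ in $\dgt$ and projecting the adjoint action of $D$ accordingly; this is exactly how \cite{LW90, STS85} proceed. Second, the step you explicitly defer --- the identity $\xi^\flat(g)=\pi^{\#}_g\br{R^*_{g^{-1}}\xi}$ --- is the entire content of part (1); everything else in your argument is bookkeeping around it, so as written this is a correct proof outline rather than a complete proof. Your handling of the local-versus-global issue (the factorization $D=G\cdot G^*$ being only open, hence the dressing action only local) is right, and is precisely what the connected-components phrasing of (2) and the leafwise phrasing of (1) are designed to absorb.
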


This theorem was proven in~\cite{LW90, STS85} for finite-dimensional $G$ and generalized to the case of Kac-Moody groups with the standard Poisson structure in~\cite{Wil13}. Although $\M$ is not a Poisson-Lie group, we can still prove a similar result. In our case, the role of $G^*$ is played by $GL_m[z]$, and varieties of polynomials with a given Smith Normal Form are nothing but $GL_m[z]$ double cosets in $\M$. According to our definition of a symplectic leaf it is enough to show that the tangent space to the leaf $\Sc$ at every point $x\in\Sc$, i.e. the span of Hamiltonian vector fields at $x$, coincides with the tangent space to the $GL_m[z]$ double cosets. The outline of the proof is as follows. We first prove Propositions~\ref{prop_dressing} and~\ref{prop-cosets} which are local versions of the statements~1 and~2 of the Theorem~\ref{th-fd-leaves} respectively. Then we show that the tangent space to a leaf $\Sc_x$ containing $x\in\M$ indeed exponentiates to an orbit of the $GL_m[z] x GL_m[z]$.

Consider the following decomposition of the vector space $\Tc=\Mat_m((z^{-1}))$
$$
\Tc = \Tc^+ \oplus \Tc^- \quad\text{where}\quad \Tc^+=\Mat_m[z] \quad\text{and}\quad \Tc^-=z^{-1}\Mat_m[[z^{-1}]].
$$
Let $\Tc^\pm_n\subs\Tc^\pm$ be the subspaces of matrix polynomials of degree at most $n$ in $z$ or $z^{-1}$ respectively. Then $\Tc^-_n$ can be identified with the tangent space $T_P \M_n$ at any point $P \in \M_n$. Denote by $A_\pm$ projections of an element $A\in\Tc$ onto $\Tc_\pm$.

\begin{prop}
\label{prop_dressing}
The tangent space $T_P\Sc$ to the symplectic leaf $\Sc\subs\M_n$ at the point $P(z)$ coincides with the space of all matrix polynomials of the form
\beq
\label{dressing}
X_A(P) = \br{PA}_+P - P\br{AP}_+
\eeq
where $A\in\Tc^+$
\end{prop}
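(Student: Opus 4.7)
The plan is to compute the Hamiltonian vector field of a generic linear function $\phi_A(P) = \res_u \tr(A(u) P(u))$ on $\M_n$, for $A \in \Tc^+$, show it equals $X_A(P)$, and then deduce the proposition from the fact that by Definition~\ref{def_leaves} the tangent space $T_P\Sc$ is spanned by Hamiltonian vector fields at $P$.

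The computation proceeds via the Leningrad notation~\eqref{Poisson}. Writing $\phi_A = \res_v \tr(A(v) T(v))$ and substituting $\{T_1(u), T_2(v)\} = [\Omega/(u-v),\, T_1(u) T_2(v)]$, one obtains
\[
\{T(u), \phi_A\} \;=\; \res_v\, \tr_2\!\bigl((1 \otimes A(v))\,[\Omega/(u-v),\, T_1(u) T_2(v)]\bigr).
\]
The two partial traces $\tr_2((1\otimes A)\Omega T_1 T_2)$ and $\tr_2((1\otimes A) T_1 T_2 \Omega)$ collapse, via the elementary identity for $\Omega = \sum_{ij} E_{ij}\otimes E_{ji}$, to the matrix products $T(v) A(v) T(u)$ and $T(u) A(v) T(v)$ respectively — note the reversed positions of $T(u)$ and $T(v)$ in the two summands. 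With the sign convention $\xi_\phi(P) = -\{T(u),\phi\}(P)$ read off from the definition of the Hamiltonian vector field in Section~\ref{sub-fd-theory}, evaluation at $T = P$ yields
\[
\xi_{\phi_A}(P) \;=\; \res_v \frac{P(u) A(v) P(v) - P(v) A(v) P(u)}{u - v}.
\]
Applying the elementary identity $\res_v f(v)/(u - v) = f_-(u)$, valid for any Laurent polynomial $f(v) \in \Mat_m[v, v^{-1}]$ with $(u-v)^{-1}$ expanded in $|u| > |v|$ as in~\eqref{r}, and factoring $P(u)$ out on each side, the right-hand side becomes $P \cdot (AP)_- - (PA)_- \cdot P$. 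A direct substitution $(PA)_\pm = PA - (PA)_\mp$ (and similarly for $AP$) shows this coincides with $(PA)_+ P - P(AP)_+ = X_A(P)$. A useful byproduct of this rewriting is that $X_A(P)$ lies in $\Tc^-_n$: its positive-power terms cancel between $(PA)_+P$ and $P(AP)_+$, while the negative parts of each summand clearly have degree at most $n$ in $u^{-1}$.

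To close the argument, observe that the residue pairing provides a non-degenerate duality $\Tc^-_n \times \Tc^+_{n-1} \to \C$, identifying $T_P^{\ast}\M_n$ with $\Tc^+_{n-1}$. Hence for every regular function $\phi$ on $\M_n$ the differential $d\phi(P)$ is represented by some $A \in \Tc^+_{n-1}$, so the corresponding Hamiltonian vector field equals $\xi_{\phi_A}(P) = X_A(P)$. A direct check shows $X_A(P) = 0$ whenever $A$ has degree $\ge n$, so enlarging $A$ from $\Tc^+_{n-1}$ to $\Tc^+$ contributes no new tangent vectors; the proposition follows.

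The main obstacle is the r-matrix calculation — correctly tracking that $P(u)$ and $P(v)$ end up on opposite sides of $A(v)$ in the two halves of the commutator, which is exactly what makes the final answer split cleanly as $P(AP)_- - (PA)_- P$ with $P$ on the outer slot of each summand. Once the formula $\xi_{\phi_A}(P) = X_A(P)$ has been verified, the remaining step is a standard Poisson-geometric deduction reminiscent of the dressing-action picture in Theorem~\ref{th-fd-leaves}.
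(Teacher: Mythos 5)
Your proof is correct, and it arrives at \eqref{dressing} by a genuinely different computation than the paper's. The overall strategy is shared: by Definition~\ref{def_leaves} the tangent space $T_P\Sc$ is the span of Hamiltonian vectors at $P$, and since the residue pairing identifies $T_P^*\M_n$ with $\Tc^+_{n-1}$ it suffices to compute the Hamiltonian vector fields of the linear functionals indexed by $A\in\Tc^+_{n-1}$ (both you and the paper also note that $A\in z^n\Mat_m[z]$ contributes nothing). The difference is in how that computation is carried out. The paper works coefficientwise: it extracts the vector fields $\xi_{ij}^{(r)}$ from \eqref{Poisson-coord}, assembles them into the explicit formula \eqref{vf}, and then matches \eqref{vf} against \eqref{dressing} by induction on the degree $n$ of $P$. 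You stay in generating-function form: pairing the Leningrad bracket \eqref{Poisson} against $1\otimes A(v)$ and taking $\tr_2$ collapses the two halves of the commutator to $T(v)A(v)T(u)$ and $T(u)A(v)T(v)$ (I checked the placement of $T(u)$ and $T(v)$ on opposite sides of $A(v)$, and the sign coming from $\ha{\xi_\phi,d\psi}=\hc{\phi,\psi}$), after which the single identity $\res_v f(v)/(u-v)=f_-(u)$, with the expansion of \eqref{r}, yields $P(AP)_- - (PA)_-P = (PA)_+P - P(AP)_+$ in closed form with no induction. Your route buys a shorter, basis-free derivation and makes the containment $X_A(P)\in\Tc^-_n$ transparent; the paper's route has the minor advantage of producing the explicit coefficient formula \eqref{vf}, which is convenient when comparing with \eqref{cosets} in the next proposition. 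No gaps.
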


\begin{proof}
Let $\xi_{ij}^{(r)}$ be a Hamiltonian vector field corresponding to the function $t_{ij}^{(r)}$. By formula~\eqref{Poisson-coord} we have
\beq
\label{xi-t}
\ha{\xi_{ij}^{(r)}, dt_{kl}^{(s)}} (P(z)) =
\suml{q=\max(r,s)}{\min(n,r+s-1)} \hr{t_{kj}^{(r+s-q-1)} t_{il}^{(q)} - t_{kj}^{(q)}t_{il}^{(r+s-q-1)}} (P(z))
\eeq
for any $P(z) \in \M_n$. In the above formula, pairing with the differential $dt_{kl}^{(s)}$ evaluates the $(k,l)$-entry of the coefficient in front of $z^{-s}$ of the vector $\xi_{ij}^{(r)}(P(z))$. After eliminating indices $k$ and $l$ from the formula~\eqref{xi-t}, summing over $s$, and changing $r$ to $r+1$ we get
\beq
\label{xi}
\xi_{ij}^{(r+1)} (P(z)) = \suml{s=1}{n}\hr{\suml{q=\max(s,r+1)}{\min(n,r+s)} P_{s+r-q}E_{ji}P_{q} - P_{q}E_{ji}P_{s+r-q} } z^{-s},
\eeq
where $P_0=\one$ and $E_{ji}$ is a matrix unit. Let $A=\suml{r=0}{n-1}A_{-r}z^r$ be an element of $\Tc^+_{n-1}$. Set
$$
\xi_A = \suml{r=0}{n-1}\tr\hr{ A_{-r}^t\xi^{(r+1)} }
\qquad\text{where}\qquad
\xi^{(r)} = \hr{\xi_{ij}^{(r)}}_{i,j=1}^m
$$
and $A_{-r}^t$ stands for the transpose of the matrix $A_{-r}$. The vector fields $\xi_A$ for various $A\in\Tc^+_{n-1}$ exhaust all hamiltonian vector fields on $\M_n$. Formula~\eqref{xi} implies that the vector field $\xi_A$ at point $P(z)\in\M_n$ can be written as
\beq
\label{vf}
\xi_A(P(z)) = \suml{s=1}{n}\hr{\suml{r=0}{n-1}\suml{q=\max(s,r+1)}{\min(n,r+s)} \hs{P_{s+r-q},A_{-r},P_{q}}}z^{-s},
\eeq
where
$$
\hs{P_{s+r-q},A_{-r},P_{q}} = P_{s+r-q}A_{-r}P_{q} - P_{q}A_{-r}P_{s+r-q}.
$$

Now, it is only left to show that the expressions~\eqref{dressing} and~\eqref{vf} coincide. First of all, note that it is enough to consider only $A\in\Tc^+_{n-1}$ in~\eqref{dressing}. Indeed for $A\in z^n\Mat_m[z]$ one has
$$
\br{PA}_+P - P\br{AP}_+ = PAP - PAP = 0.
$$
We will prove the rest by induction on $n$. For $n=1$ we have $P = 1 + P_1z^{-1}$, $A = A_0$, thus
$$
X_A(P) = \br{PA}_+P - P\br{AP}_+ = A_0P - PA_0 = \hs{A_0,P_1}z^{-1}.
$$
On the other hand, $n=1$ forces $s=q=1$ and $r=0$ in~\eqref{vf}, in which case
$$
\xi_A(P) = \hs{P_0,A_0,P_1}z^{-1} = X_A(P).
$$

Now, since $X_A$ is linear in $A$ it is enough to consider only $A = A_{-r}z^r$. Assume, that $\xi_A(P) = X_A(P)$ for any $k\le n$. Let
$$
P = 1 + P_1z^{-1} \spl P_n z^{-n}, \qquad \wt P = P + P_{n+1} z^{-n-1}.
$$
one has
$$
X_A(\wt P) = (PA)_+\wt P - \wt P(AP)_+ = X_A(P) + \bar X_A(\wt P)
$$
where
$$
\bar X_A(\wt P) = z^{-n-1}\br{(PA)_+P_{n+1} - P_{n+1}(AP)_+}.
$$
On the other hand,
$$
\xi_A(\wt P) = \xi_A(P) + \bar \xi_A(\wt P),
$$
where $\bar \xi_A(\wt P)$ are terms depending on $P_{n+1}$. It is easy to check that
$$
\bar \xi_A(\wt P) = \hs{P_r,A_{-r},P_{n+1}}z^{-n-1} + \suml{s=0}{r-1}\hs{P_s,A_{-r},P_{n+1}}z^{r-1-n-s} = \bar X_A(\wt P),
$$
which finishes the proof.
\end{proof}

\begin{prop}
\label{prop-cosets}
The tangent space $T_P\Sc$ coincides with the space of all matrix polynomials of the form
\beq
\label{cosets}
BP - PC \in\Tc^-_n \qquad\text{such that}\qquad B,C\in\Tc^+_{n-1}.
\eeq
\end{prop}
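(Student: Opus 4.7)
The plan is to prove the two inclusions separately, the forward one being essentially a rewriting of Proposition~\ref{prop_dressing} and the reverse one requiring a small construction that exploits the invertibility of $P$.

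For the inclusion ``$\subseteq$'', given any $A \in \Tc^+$ I set $B = (PA)_+$ and $C = (AP)_+$, so that $X_A(P) = BP - PC$. As remarked in the proof of Proposition~\ref{prop_dressing}, it suffices to let $A$ range over $\Tc^+_{n-1}$, and a quick degree count using $P \in 1 + z^{-1}\Mat_m[z^{-1}]$ then confirms $B, C \in \Tc^+_{n-1}$. Membership $BP - PC \in \Tc^-_n$ is automatic since $X_A(P)$ is a tangent vector to $\M_n$ at $P$.

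For the reverse inclusion I will use that $P$ is monic in $z^{-1}$ with constant term $\one$, hence invertible in $\Mat_m[[z^{-1}]]$ with $P^{-1} \in 1 + z^{-1}\Mat_m[[z^{-1}]]$. Given $B, C \in \Tc^+_{n-1}$ with $BP - PC \in \Tc^-_n$, I take the candidate $A = (P^{-1}B)_+$ (which automatically lies in $\Tc^+_{n-1}$, since $P^{-1}B$ has $z$-degree at most $n-1$). Writing $PA = B - P(P^{-1}B)_-$ and noting that $P(P^{-1}B)_- \in z^{-1}\Mat_m[[z^{-1}]]$ (as $P$ contains no positive powers of $z$), I conclude $(PA)_+ = B$. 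The symmetric manipulation $AP = BP - (P^{-1}B)_-P$ gives $(AP)_+ = (BP)_+$. Finally, the hypothesis $BP - PC \in \Tc^-_n$ reads $(BP)_+ = (PC)_+$, and $(PC)_+ = C$ because $C$ is a polynomial of $z$-degree $\le n-1$ while $P - 1 \in z^{-1}\Mat_m[z^{-1}]$. Substituting into~\eqref{dressing} yields $X_A(P) = BP - PC$, as desired.

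The only delicate point is bookkeeping of the ``positive part'' operations; once one notices that $P$ and $P^{-1}$ both lie in $1 + z^{-1}\Mat_m[[z^{-1}]]$, so that left- or right-multiplication by $P$ preserves the positive part of a $z$-polynomial modulo terms in $z^{-1}\Mat_m[[z^{-1}]]$, the computation unfolds with no further obstacle.
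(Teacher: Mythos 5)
Your forward inclusion is fine and matches the paper. The reverse inclusion, however, rests on two identities that fail as soon as $n\ge 2$. First, the ``symmetric manipulation'' is not symmetric: with $A=(P^{-1}B)_+$ one has $AP=(P^{-1}B)_+P=P^{-1}BP-(P^{-1}B)_-P$, not $BP-(P^{-1}B)_-P$, and consequently $(AP)_+=(P^{-1}BP)_+$ differs from $(BP)_+$ by $\br{(P^{-1}-\one)BP}_+$, which need not vanish. Second, $(PC)_+\ne C$ in general: writing $PC=C+(P-\one)C$, the term $(P-\one)C$ has $z$-degree as large as $n-2$ and so contributes to the positive part. For instance, with $n=2$, $P=\one+P_1z^{-1}+P_2z^{-2}$ and $C=C_0+C_1z$, one gets $(PC)_+=C+P_1C_1$. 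The false principle underlying both steps is that left or right multiplication by $P$ fixes the positive part of a $z$-polynomial; what is actually true is only that multiplication by $P$ maps $\Tc^-$ into $\Tc^-$, since $P$ has no positive powers of $z$ (this is why your computation of $(PA)_+=B$ \emph{is} correct, but the other two are not).

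What saves the statement --- and what the paper's proof implicitly uses --- is uniqueness rather than these identities. The map $C\mapsto(PC)_+$ on $\Tc^+_{n-1}$ is the identity plus a degree-lowering operator, hence bijective, so the hypothesis $(PC)_+=(BP)_+$ determines $C$ uniquely in $\Tc^+_{n-1}$. Your $A=(P^{-1}B)_+$ is indeed the unique solution of $(PA)_+=B$ (here your argument is correct, and it is a pleasant closed form for the coefficient-by-coefficient induction in the paper). Now $X_A(P)=(PA)_+P-P(AP)_+=BP-P(AP)_+$ lies in $\Tc^-_n$ by Proposition~\ref{prop_dressing}, so $(AP)_+\in\Tc^+_{n-1}$ is \emph{also} a solution of $(PX)_+=(BP)_+$; uniqueness then forces $(AP)_+=C$, whence $X_A(P)=BP-PC$ as desired. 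Replacing your two faulty equalities by this uniqueness argument repairs the proof; as written, the gap is genuine.
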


\begin{proof}
It is sufficient to show that the expressions~\eqref{dressing} and~\eqref{cosets} coincide. It follows from Proposition~\ref{prop_dressing} that every vector of the form~\eqref{dressing} is automatically of the form~\eqref{cosets}. Now we only need to prove the converse.

Consider an arbitrary element $B\in\Tc^+_{n-1}$. For any $P\in\M_n$, there exists a unique element $A\in\Tc^+_{n-1}$ such that
\beq
\label{prop-cosets-local}
B(z) = (PA)_+ = \suml{k=0}{n-1}\suml{i=0}{n-1-k}\hr{P_i A_{-k-i}}z^k.
\eeq
Indeed, equality~\eqref{prop-cosets-local} reads as a system of $n-1$ equations on the coefficients of $A$. We solve them inductively, starting from the coefficient $A_{1-n}$ in front of the top power of $z$. Condition $P_0 = \one$ guarantees that the solution exists and unique. Then we read~\eqref{cosets} as a system of $n-1$ equations on the coefficients of $C$. Once again, we solve them inductively, starting from the coefficient in front of the top power of $z$, and find that
$$
C(z) = (AP)_+.
$$
As before, $P_0 = \one$ insures that the solution exists and unique. This proves that every expression of the form~\eqref{cosets} admits a presentation of the form~\eqref{dressing}.
\end{proof}

{\noindent\bf Proof of Theorem~\ref{th-leaves}.\,}
It suffices to show that
\begin{enumerate}
\item[1)] for every $P(z) \in \M_n$ the double cosets $GL_m[z] P(z) GL_m[z]$ intersect $\M_n$ transversally at $P(z)$ and their intersection $\Kc = GL_m[z] P(z) GL_m[z] \cap \M_n$ is irreducible;
\item[2)] the tangent spaces $T_P\Kc$ and $T_P\Sc$ coincide.
\end{enumerate}
The first statement follows from~\cite[Theorem 1.4]{Ric92}, the argument there is given for finite-dimensional groups, but carries over to our case without issues. It is clear that $T_P\Kc \subs T_P\Sc$. Indeed, by $1)$ we have
$$
T_P(\Kc) = T_P(\M_n) \cap T_P(GL[z]P(z)GL[z])
$$
and all the vectors in the right hand side part are of the form~\eqref{cosets}. Thus, now it is only left to prove that $T_P\Kc \sups T_P\Sc$.

The Lie algebra $\gl_m[z]$ coincides with $\Tc^+$ as a vector space and is generated by elements
$$
E_{i,i+1}z^k, \qquad E_{i+1,i}z^k, \qquad \one z^k,
$$
with $i = 1 \sco n-1$ and $k\ge0$. In notations of Proposition~\ref{prop-cosets}, condition $B=\one z^k$ implies equalities $C=B$ and $BP-PC=0$. On the other hand, every generator $E_{i,j}z^k\in\gl_m[z]$ with $i \ne j$ exponentiates to the element $\one + E_{i,j}z^k \in GL_m[z]$. Therefore, any vector of the form~\eqref{cosets} with $B,C$ being some generators of the algebra $\gl_m[z]$ can be represented as a tangent vector to an orbit of $GL_m[z] \times GL_m[z]$ action on $\M$ by left and right translations. Thus, the same holds for any $B,C \in \Tc^+_{n-1}$. This finishes the proof.
\hfill $\blacksquare$

\begin{cor}
\label{cor-leaves}
Symplectic leaves on $\G$ are classified by sets of $m-1$ monic polynomials $q_1 \sco q_{m-1} \in \C[z]$ such that $q_i$ is divisible by $q_{i+1}$ for $i=1 \sco m-2$.
\end{cor}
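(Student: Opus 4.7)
The plan is to derive the classification from Theorem~\ref{th-leaves} by pushing the leaves of $\M$ forward along the natural surjection
$$
\tau \colon \M \to \G, \qquad P(z) \mapsto (\det P)^{-1/m} P(z),
$$
where $(\det P)^{-1/m}$ denotes the unique monic $m$-th root in $\C_1[[z^{-1}]]$. Surjectivity is immediate, since every element of $\G$ has the form $g(z)Q(z)$ with $g = (\det Q)^{-1/m}$.

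First, I would observe that on each symplectic leaf $\Sc \subset \M$, labelled by the Smith normal form $(d_1 \sco d_m)$ of $z^nP$ via Theorem~\ref{th-leaves}, the determinant $\det P = z^{-mn}\prod_i d_i$ is constant, being a Casimir by Proposition~\ref{prop_conj_inv}. Hence $\tau|_\Sc$ is translation by the single scalar $f = z^n(\prod d_i)^{-1/m} \in \C_1[[z^{-1}]]$, which by the discussion following Proposition~\ref{prop_Jacobi} is a Poisson isomorphism of $\Sc$ onto its image $f\Sc \subset \G$.

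Second, I would argue that $f\Sc$ is an entire symplectic leaf of $\G$. The scalar $f$ being central, the Poisson bivector of $\wt\G$ at $fP$ equals $\Ad_{fP}(r)-r=\Ad_P(r)-r$, so its rank coincides with the rank on $\M$ at $P$, namely $\dim \Sc$; since $\det$ is a Casimir cutting out the Poisson subvariety $\G \subset \wt\G$, the rank on $\G$ at $\tau(P)$ is also $\dim \Sc$. Thus $f\Sc$ is a connected Poisson submanifold of $\G$ whose dimension equals the symplectic rank, and therefore open in its ambient leaf $\La$. A second image $\tau(\Sc')$ meeting $\La$ would also be open in $\La$ and disjoint from $f\Sc$ (since distinct images of $\tau$ are either equal or disjoint as subsets of $\G$), contradicting the path-connectedness of $\La$ built into Definition~\ref{def_leaves}. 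Hence $\La = f\Sc$.

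Third, I would characterize when $\tau(\Sc_1) = \tau(\Sc_2)$. From $\tau(P) = \tau(P')$ one derives $P' = hP$ for a scalar $h$ with $h^m = \det P/\det P'$; monicity of $P,P'$ forces $h \in \C_1[z^{-1}]$. Under the identification $\M_n \simeq \Pc_n$, $P \mapsto z^nP$, multiplication by $h$ corresponds to multiplication of the Smith normal form $(d_i)$ by $\tilde h = z^{\deg h}h$, which ranges over all monic polynomials in $\C[z]$ as $h$ varies. Therefore leaves of $\G$ correspond bijectively to Smith normal form tuples $(d_1 \sco d_m)$ with $d_{i+1}\mid d_i$, modulo simultaneous multiplication by an arbitrary monic polynomial in $\C[z]$.

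Finally, I would pick canonical representatives. The divisibility $d_{i+1}\mid d_i$ forces $d_m = \gcd(d_1 \sco d_m)$, so each equivalence class has a unique representative with $d_m = 1$. Setting $q_i := d_i$ for $i=1 \sco m-1$ then yields $m-1$ monic polynomials in $\C[z]$ satisfying $q_{i+1}\mid q_i$ for $i=1 \sco m-2$, giving the classification. The delicate step is the second one, where verifying that $f\Sc$ fills out its ambient leaf $\La$ in $\G$ (rather than being merely open in it) relies on the rank computation together with the built-in connectedness of leaves.
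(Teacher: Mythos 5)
Your overall strategy is exactly the paper's: the paper's own proof consists of the same three observations (the map $P\mapsto P/\sqrt[m]{\det P}$ is a Poisson projection, it sends leaves to leaves, and the image leaf is characterized by $q_i=d_i/d_m$), and your first, second, fourth and fifth steps supply correct supporting detail that the paper leaves implicit. However, your third step contains a genuine error, and it is load-bearing: the claim that $\tau(P)=\tau(P')$ forces $P'=hP$ with $h\in\C_1[z^{-1}]$ a \emph{polynomial} is false. The scalar $h$ is a monic formal power series in $z^{-1}$ which happens to be rational, but it need not be polynomial. For example, $P=(1-\lambda z^{-1})\one\in\M_1$ and $P'=\one\in\M_0$ both map to $\one\in\G$, and here $h=(1-\lambda z^{-1})^{-1}$; accordingly the invariant polynomials $(z-\lambda\sco z-\lambda)$ of $zP$ are not a polynomial multiple of those of $P'$ --- the relation goes the other way. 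Since this step is also what is supposed to justify the parenthetical claim in your second step that distinct images $\tau(\Sc)$, $\tau(\Sc')$ are equal or disjoint (without which your connectedness argument does not close), it must be repaired. The repair is short: write $z^{n'}P'=g\cdot z^nP$ with $g=u/v$ a ratio of coprime monic polynomials in $z$; then $v\cdot z^{n'}P'=u\cdot z^nP$, so $v$ divides every entry of $z^nP$ and hence $v\mid d_m$, whence the Smith normal forms satisfy $d_i'=(u/v)\,d_i$ and in particular $d_i/d_m=d_i'/d_m'$. The equivalence relation you finally use (simultaneous multiplication by a monic polynomial, taken as a generated equivalence relation) is then correct.

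A second caveat, which your write-up shares with the paper's own proof: the argument establishes that each leaf of $\G$ determines the tuple $(q_1\sco q_{m-1})$ injectively, but a classification also requires knowing which tuples occur. Since the invariant polynomials of a leaf in $\M'_n$ satisfy $\sum_i\deg d_i=mn$, the tuples realized are exactly those with $\sum_{i=1}^{m-1}\deg q_i$ divisible by $m$; already for $m=2$ one has $\deg q_1=2(n-\deg d_2)$, so no leaf of $\G$ produces $q_1$ of odd degree. Your normalization ``$d_m=1$'' produces a representative of each equivalence class of tuples, but not every such normalized tuple lifts to an actual Smith normal form of a monic matrix polynomial, so the final sentence of your step five (``giving the classification'') asserts a bijection that has not been checked and, for the degree reason above, cannot hold as stated. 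At minimum you should record the divisibility constraint on $\sum\deg q_i$ alongside the divisibility conditions $q_{i+1}\mid q_i$.
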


\begin{proof}
Consider a map
$$
\M_n \to \G, \qquad P(z) \mapsto \frac{P(z)}{\sqrt[m]{\det P(z)}}.
$$
It is a Poisson projection on its image, and sends symplectic leaves to symplectic leaves. Now, the image of a symplectic leaf $\Sc\in\M$, corresponding to a Smith normal form with invariant polynomials $d_1, \sco d_m$, is characterized by polynomials $q_i = d_i/d_m$ where $i=1 \sco m-1$.
\end{proof}

\section{Properties of symplectic leaves}

In this section we describe dimensions, closures, and classes of birationally isomorphic symplectic leaves. We also show that a generic leaf on $\Pc_n$ is covered by a number of open subsets with birational Poisson transition functions, each subset being birationally isomorphic to a product of $n$ coadjoint $GL_m$ orbits.

\subsection{Closures, dimensions, isomorphism classes}

Recall the Poisson isomorphism~\eqref{MP}. For simplicity, in what follows we will be working with varieties $\Pc_n$ rather than $\M_n$.

\begin{defn}
The \emph{Smith normal form}, \emph{invariant polynomials}, and the \emph{determinant}, denoted $\det\Sc$, of a symplectic leaf $\Sc\subs\Pc_n$ are respectively the Smith normal form, invariant polynomials, and the determinant of some (thus, any) matrix~$P(z)\in\Sc$.
\end{defn}

The determinant $\det P(z)$ of any matrix $P(z)\in\Pc_n$ is a polynomial of degree $mn$. Denote by $S_k$ the symmetric group on $k$ elements and consider the map
\beq
\label{chi}
\chi_n \colon \Pc_n \longra \C^{mn}/S_{mn}
\eeq
sending a polynomial $P(z)$ to the collection of roots of $\det P(z)$. It follows from Proposition~\ref{prop_conj_inv} and the fact that $\det$ is an $\ad$-invariant function on $\Pc_n$ that the fibers $\chi_n^{-1}(x)$ of the map~\eqref{chi} are Poisson. Moreover, generic fiber is symplectic. Indeed, roots of the determinant of a generic matrix $P(z)$ are pairwise distinct. This together with conditions that $d_{i+1}$ divides $d_i$ for $i = 1 \sco m-1$ and $\det P(z) = d_1(z) \sd d_m(z)$ forces invariant polynomials to satisfy $d_2 \sco d_m \equiv 1$ and $d_1 = \det P(z)$. Thus, for generic $x\in\C^{mn}/S_{mn}$ the fiber $\chi_n^{-1}(x)$ is a symplectic leaf.

Recall the definition of the thin affine Grassmannian
$$
\Gr = G[z,z^{-1}]/G[z].
$$
We can endow $\Gr$ with a Poisson structure in exact same way as it was done for $\M$. This structure will coincide with the one discussed in~\cite{KWWY14}. Let $G_1[z^{-1}]$ be the first congruence subgroup of the loop group $G[z^{-1}]$. For $G = GL_m$ we get
$$
G_1[z^{-1}] = \M \cap G[z,z^{-1}]
$$
where the intersection is taken inside $G((z^{-1}))$. On the other hand,
$$
G_1[z^{-1}] \simeq G_1[z^{-1}]G[z]/G[z] \subset \Gr
$$
is a Poisson subvariety of $\Gr$ (see~\cite{KWWY14}). Define $G_1[z^{-1}]_n \eqdef \M_n \cap G[z,z^{-1}]$.

\begin{prop}
We have an isomorphism of Poisson varieties
\beq
\label{Gr-thin}
G_1[z^{-1}]_n \simeq \chi_n^{-1}(0).
\eeq
\end{prop}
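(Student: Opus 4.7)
The proposal is to verify the isomorphism in two stages: first as a set-theoretic bijection, and then as an identification of Poisson structures. Both stages are essentially routine once the right identifications are made.

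For the set-theoretic part, I would observe that under the isomorphism $\M_n\simeq\Pc_n$, $P(z)\mapsto z^nP(z)$, the determinant transforms as
$$
\det\bigl(z^nP(z)\bigr) = z^{mn}\det P(z),
$$
which is a monic polynomial in $z$ of degree $mn$. The condition $\chi_n(z^nP(z))=0$ means that all its roots are at the origin, i.e.\ $\det P(z)=1$ as a power series in $z^{-1}$. On the other hand, a monic polynomial $P(z)\in\M_n$ lies in $G[z,z^{-1}]$ exactly when $\det P(z)$ is a unit in $\C[z,z^{-1}]$, i.e.\ of the form $cz^k$ with $c\in\C^\ast$; combined with $\det P(z)\in1+z^{-1}\C[z^{-1}]$, this forces $k=0$ and $c=1$. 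Hence $G_1[z^{-1}]_n$ and $\chi_n^{-1}(0)$ coincide as subsets of $\M_n\simeq\Pc_n$.

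For the Poisson part, the key tool is Proposition~\ref{prop_conj_inv}: since $\det$ is conjugation-invariant, it is a Casimir function on $\wt\G$, and in particular on $\M$. Consequently the level set $\{\det P(z)=1\}\subs\M_n$ is a Poisson subvariety, and its image $\chi_n^{-1}(0)\subs\Pc_n$ inherits this Poisson structure through the Poisson isomorphism $\M_n\simeq\Pc_n$ from~\eqref{MP}. It remains to check that the Poisson structure on $G_1[z^{-1}]_n$ pulled back from $\Gr$ agrees with this one. Both are defined via the same rational $r$-matrix bivector $\pi(g)=\Ad_g(r)-r$, evaluated at the same point $g\in G_1[z^{-1}]$: the structure on $\Gr$ arises because $G[z]$ is coisotropic in $G[z,z^{-1}]$ (equivalently, because $G[z]$-conjugation-invariant functions Poisson-commute with pullbacks from $\Gr$), while the structure on $\M$ uses $\pi$ directly, as in Proposition~\ref{prop_bivector}. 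Thus on the common point set $G_1[z^{-1}]$ the two bivectors coincide.

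The only subtlety, and the main thing to verify carefully, is the second paragraph: one must make sure that the restriction of the bivector $\Ad_g(r)-r$ to $G_1[z^{-1}]_n$, viewed inside $\M_n$ on one hand and inside $\Gr$ on the other, really is the same tangent bivector. This is straightforward once one notes that the tangent space $T_PG_1[z^{-1}]_n$ embeds canonically into $T_P\M_n$ as the kernel of $d\det_P$, that the image of the bivector indeed lies there (by the Casimir property of $\det$), and that the projection $G[z,z^{-1}]\to\Gr$ is an isomorphism from $G_1[z^{-1}]$ onto an open subvariety of $\Gr$ whose tangent bivector is defined by the very same formula. No new computation is required beyond the Casimir observation and a comparison of formulas.
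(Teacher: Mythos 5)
Your argument is correct and follows the paper's own proof: the paper likewise identifies $G_1[z^{-1}]_n$ with $\{P\in\M_n \mid \det P(z)=1\}$ via the observation that invertibility in $G[z,z^{-1}]$ forces $\det P(z)$ to be a Laurent monomial, hence $1$ by monicity, and then notes that the isomorphism is just the restriction of the Poisson isomorphism~\eqref{MP}. Your second and third paragraphs spell out the compatibility of the $\Gr$-induced and $\M$-induced Poisson structures, which the paper leaves implicit (it defines the structure on $\Gr$ by the same $r$-matrix bivector), so this is added care rather than a different route.
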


\begin{proof}
Indeed, $G_1[z^{-1}]$ can be described as a subgroup of $\M$ consisting of elements $P(z)$ invertible in $G[z,z^{-1}]$. This is equivalent to the condition that $\det P(z)$ is invertible as a Laurent polynomial, hence a Laurent monomial. Since $\det P(z) \in 1 + z^{-1}\C[z^{-1}]$ for any $P(z) \in \M_n$, we get
$$
G_1[z^{-1}]_n = \hc{P(z) \in \M_n \,|\, \det P(z) = 1}.
$$
On the other hand,
$$
\chi_n^{-1}(0) = \hc{P(z) \in \Pc_n \,|\, \det P(z) = z^{mn}}.
$$
Hence, the isomorphism~\eqref{Gr-thin} is the restriction of the isomorphism~\eqref{MP}.
\end{proof}

Let $d_1 \sco d_m$ be the invariant polynomials of a symplectic leaf $\Sc\subs\Pc_n$ with degrees $r_1 \sco r_m$ respectively. Clearly, $r_1 \spl r_m = mn$. Set
$$
\al = (\al_1 \sco \al_m) \qquad\text{where}\qquad \al_i = r_i - n.
$$

\begin{defn}
We call $\al$ the \emph{type} of the symplectic leaf $\Sc$.
\end{defn}

Evidently, types of symplectic leaves on $\M$ may be identified with dominant coweights of the Lie algebra $\sl_m$.

\begin{prop}
\label{prop-isom}
Two leaves of the same type are Poisson birationally isomorphic.
\end{prop}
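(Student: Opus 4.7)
The plan is to realize every leaf of type $\al$ as, birationally, a product of coadjoint $GL_m$-orbits depending (up to symplectic isomorphism) only on $\al$, using the factorization of matrix polynomials developed in the subsection that follows.

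Let $\Pc_n^\al\subs\Pc_n$ denote the union of all symplectic leaves of type $\al$, and let $\Lambda^\al$ be the variety of tuples $(d_1 \sco d_m)$ of monic polynomials with $\deg d_i=r_i$ and $d_{i+1}\mid d_i$. The substitution $e_i=d_i/d_{i+1}$ (with $d_{m+1}\equiv 1$) identifies $\Lambda^\al$ with the irreducible affine space $\prodl{i=1}{m}\Abb^{r_i-r_{i+1}}$, each factor parameterizing a monic polynomial of degree $r_i-r_{i+1}$. By Theorem~\ref{th-leaves}, the leaves of type $\al$ are exactly the fibers of the projection
\[
\pi\colon\Pc_n^\al\longra\Lambda^\al,\qquad P\longmapsto(d_1 \sco d_m),
\]
and this projection is Poisson with respect to the trivial bracket on $\Lambda^\al$, since the coefficients of the $d_i$ are constant along each leaf.

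For a generic $\lambda\in\Lambda^\al$, each $e_i$ has $r_i-r_{i+1}$ pairwise distinct roots, all distinct from the roots of any $e_j$ with $j\ne i$. The key input, provided by the birational factorization of matrix polynomials developed in the following subsection, is a birational Poisson isomorphism
\[
\pi^{-1}(\lambda) \longra \prods{\mu}\Oc_\mu,
\]
with one factor $\Oc_\mu$ for each root $\mu$ of $\det P(z)$; here $\Oc_\mu$ is a coadjoint $GL_m$-orbit whose symplectic isomorphism type is governed by the partition $(\nu_1(\mu) \sco \nu_m(\mu))$ of orders of vanishing of the $d_i$ at $\mu$.

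Two observations finish the argument. Firstly, the isomorphism class of a coadjoint $GL_m$-orbit depends only on the Jordan type of its elements and is invariant under the translation $X\mapsto X+c\cdot\one$ on $\gl_m^\ast$, which is a Poisson automorphism. Hence $\Oc_\mu$ up to symplectic isomorphism is independent of the specific value of the root $\mu$. Secondly, for generic $\lambda\in\Lambda^\al$ the multiset of partitions $\hc{(\nu_1(\mu) \sco \nu_m(\mu))}_\mu$ is determined by $\al$ alone: an elementary computation shows that there are exactly $r_k-r_{k+1}$ roots carrying the partition $(1^k)$, for each $k=1\sco m$. Consequently, the product $\prods{\mu}\Oc_\mu$ is, up to symplectic isomorphism, a single variety $\Sc_\al$ depending only on $\al$.

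We conclude that any two generic fibers of $\pi$ are Poisson birationally isomorphic to this common model $\Sc_\al$, and hence to each other; the result then extends to all leaves of type $\al$ by irreducibility of $\Pc_n^\al$. The main obstacle is the birational Poisson factorization itself, which is taken up in the following subsection.
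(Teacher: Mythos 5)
Your proof hinges on a ``key input'' that the paper does not supply and that you do not prove: a birational Poisson isomorphism from a leaf of arbitrary type $\al$ onto a product of coadjoint $GL_m$-orbits indexed by the roots of $\det P(z)$, with each orbit determined by the vanishing orders of the $d_i$. The factorization actually developed in the following subsection (Lemma~\ref{lem-div}, Corollary~\ref{cor-factor}, Theorem~\ref{th-factor}) applies only to \emph{generic} leaves, i.e.\ those with $mn$ pairwise distinct eigenvalues --- which is a single type, namely $r_1=mn$, $r_2=\dots=r_m=0$ --- and it produces $n$ regular semisimple orbits, one per block of $m$ eigenvalues, not one orbit per root. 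For any other type, $\det P(z)$ necessarily has repeated roots and Lemma~\ref{lem-div} is simply unavailable. Your proposed decomposition is also dimensionally inconsistent: already for the generic type it would yield $mn$ nontrivial coadjoint $GL_m$-orbits, hence dimension at least $2mn(m-1)$, whereas the leaf has dimension $mn(m-1)$. Separately, your last step --- extending from generic $\lambda\in\Lambda^\al$ to all leaves of type $\al$ ``by irreducibility of $\Pc_n^\al$'' --- is not a valid inference: special fibers of a family need not be birational to the generic fiber, and the leaves where roots of the $d_i$ collide are exactly the fibers your argument omits.

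For comparison, the paper's proof avoids factorization entirely. It uses the classical analogues $a_i(z)$, $b_i(z)$ of Drinfeld's new coordinates (certain minors of $P(z)$) and the rational functions $e_i(z)=b_i(z)/a_i(z)$. On any leaf $\Sc$ with invariant polynomials $d_1\sco d_m$, the generic $\gcd(a_i,b_i)$ equals $d_{m-i+1}\cdots d_m$, so $e_i$ has $k_i=n(m-i)-r_{m-i+1}-\dots-r_m$ poles; on the open subset where these poles are simple, the poles and residues $(x_{i,s},y_{i,s})$ form \'etale coordinates (by~\cite{GKLO05, FKMM99, FM99}) whose number and mutual Poisson brackets depend only on $\al$. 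This gives the birational Poisson isomorphism between any two leaves of the same type in one step, with no genericity assumption on the leaf itself.
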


\begin{proof}
Given an $m \times m$ matrix $A$ denote by $A_{i_1 \sco i_r}^{j_1 \sco j_r}$ the minor sitting in the intersection of rows $i_1 \sco i_r$ and columns $j_1 \sco j_r$. The following functions
\beq
\label{Drinfeld}
\bar a_i(A) = A_{i+1 \sco m}^{i+1 \sco m}, \qquad \bar b_i(A) = A_{i+1 \sco m}^{i, i+2 \sco m} \qquad\text{for}\qquad i=1\sco m-1
\eeq
serve as classical analogue of Drinfeld's new coordinates~\cite{Dri88} on $\Y(\gl_m)$. Let $\Sc$ be a symplectic leaf with invariant polynomials $d_1 \sco d_m$. For any $P(z)\in\Sc$, the function $a_i(z) = \bar a_i(P(z))$ is a monic polynomial in $z$ of degree $n(m-i)$, and $b_i(z) = \bar b_i(P(z))$ is a polynomial in $z$ of degree less than $n(m-i)$. Define rational functions
\beq
\label{monopole}
e_i(z) = \frac{b_i(z)}{a_i(z)}, \qquad i=1 \sco m-1.
\eeq
For a generic $P(z) \in \Sc$ the greatest common divisor of $a_i(z)$ and $b_i(z)$ is equal to the product $d_{m-i+1} \sd d_m$, hence the function $e_i(z)$ has $k_i$ poles, where
$$
k_i = n(m-i) - r_{m-i+1} - \dots - r_m.
$$
Let $\Sc^\circ$ be the open subset in $\Sc$ where functions $e_i(z)$ have $k_i$ simple poles. For $i = 1 \sco m-1$ and $s = 1 \sco k_i$ let $x_{i,s}$ denote the poles of $e_i(z)$, and set $y_{i,s}$ to be the residues of $e_i(z)$ at $x_{i,s}$. Based on results of~\cite{FKMM99, FM99} it was shown in~\cite{GKLO05} that parameters $(x_{i,s}, y_{i,s})$ define \'etale coordinates on $\Sc^\circ$.

Now, note that the \'etale coordinates $(x_{i,s}, y_{i,s})$ and the Poisson brackets between them (see~\cite{GKLO05, FKRD15}) depend only on the type of the leaf $\Sc$ but not on the set of its invariant polynomials. Hence, for a pair of symplectic leaves $\Sc_1$ and $\Sc_2$ of the same type we get that their open subsets $\Sc_1^\circ$ and $\Sc_2^\circ$ are Poisson isomorphic.
\end{proof}

\begin{cor}
Leaves of type $\al$ are of dimension
$$
2\ha{\al,\rho} = \suml{i=1}{m}(m+1-2i)\al_i = \suml{i=1}{m}(m+1-2i)r_i
$$
where $\rho$ is the half sum of all positive roots of $\sl_m$.
\end{cor}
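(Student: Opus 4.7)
My plan is to compute the dimension directly from the étale coordinates constructed in the proof of Proposition~\ref{prop-isom} and then rearrange the resulting sum into the two stated forms. Since a generic point of a leaf $\Sc$ of type $\al$ admits local coordinates $(x_{i,s}, y_{i,s})$ with $i = 1 \sco m-1$ and $s = 1 \sco k_i$, where
$$
k_i = n(m-i) - r_{m-i+1} \spl r_m,
$$
the dimension of $\Sc$ equals $2\sums{i=1}^{m-1} k_i$. The whole task is therefore reduced to a bookkeeping calculation.

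First I would evaluate
$$
\sums{i=1}^{m-1} n(m-i) = n\cdot\tfrac{m(m-1)}{2}
$$
and then interchange the order of summation in $\sums{i=1}^{m-1}\sums{j=m-i+1}^{m} r_j$. A fixed $r_j$ with $j\ge2$ appears once for each $i\ge m-j+1$, that is, exactly $j-1$ times, giving $\sums{j=1}^{m}(j-1)r_j$. Combining these, and using $r_1 \spl r_m = mn$ to rewrite $n\tfrac{m(m-1)}{2}$ as $\tfrac{m-1}{2}\sums{j}r_j$, I obtain
$$
2\sums{i=1}^{m-1} k_i = \sums{j=1}^{m}(m+1-2j)\,r_j,
$$
which is the last equality stated in the corollary.

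Next I would pass from $r_j$ to $\al_j = r_j - n$. Because $\sums{j=1}^{m}(m+1-2j) = m(m+1) - 2\cdot\tfrac{m(m+1)}{2} = 0$, subtracting $n$ from each $r_j$ does not change the sum, so
$$
\sums{j=1}^{m}(m+1-2j)\,r_j = \sums{j=1}^{m}(m+1-2j)\,\al_j.
$$
Finally, writing $\al$ in the standard basis $(e_1 \sco e_m)$ of the Cartan of $\gl_m$ (so that $\al_i$ becomes the $i$-th coordinate) and recalling that for $\sl_m$ the sum of positive roots equals $\sums{i<j}(e_i - e_j) = \sums{i=1}^{m}(m+1-2i)\,e_i$, we recognize the right-hand side as $2\ha{\al,\rho}$.

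There is essentially no obstacle: the étale coordinate count from Proposition~\ref{prop-isom} does the real work, and the rest is identification of two ways to write the same Weyl-invariant linear functional of $\al$. The only care needed is checking that the shift $r_j \mapsto r_j - n$ leaves the sum invariant (via $\sums_j (m+1-2j)=0$), which makes the passage between the two displayed forms transparent.
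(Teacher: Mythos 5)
Your proof is correct, but it takes the route the paper explicitly declines to take. The paper's own argument is indirect: it invokes Proposition~\ref{prop-isom} only to conclude that the dimension depends solely on the type $\al$, then observes that every type is realized by some leaf inside $\chi_n^{-1}(0)$, and transports the question through the isomorphism~\eqref{Gr-thin} to the thin affine Grassmannian, where the dimension $2\ha{\al,\rho}$ of the leaves is quoted from~\cite{MV00}. You instead extract more from the proof of Proposition~\ref{prop-isom}: the \'etale coordinates $(x_{i,s},y_{i,s})$ on the dense open subset $\Sc^\circ$ give $\dim\Sc = 2\sum_{i=1}^{m-1}k_i$ outright, and the rest is the interchange of summation plus the identity $\sum_j(m+1-2j)=0$ and the standard expression $2\rho=\sum_i(m+1-2i)e_i$; all of these steps check out. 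What your version buys is self-containedness --- no appeal to~\cite{MV00} and no detour through $\Gr$ --- at the cost of leaning harder on the assertion (itself imported from~\cite{GKLO05, FKMM99, FM99}) that the $(x_{i,s},y_{i,s})$ are genuinely \'etale coordinates on a nonempty open subset of the leaf; the paper's route only needs the weaker consequence that leaves of equal type have equal dimension. One typographical slip: you wrote $k_i = n(m-i) - r_{m-i+1} + \dots + r_m$ where the whole tail sum should be subtracted, as your subsequent computation correctly assumes.
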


\begin{proof}
Although, it is easy to calculate the dimension of a symplectic leaf directly, let us take a different approach. By Proposition~\ref{prop-isom}, the dimension of a symplectic leaf depends only on its type. Note, that for any symplectic leaf $\Sc \subs \Pc_n$ of type $\al$ there exists a symplectic leaf $\Sc' \subs \chi_n^{-1}(0)$ of the same type. Now, the statement follows from the isomorphism~\eqref{Gr-thin}, and the dimensions of symplectic leaves in the thin affine Grassmannian, see e.g.~\cite{MV00}.
\end{proof}

\begin{prop}
The closure $\ov\Sc \subset \Pc_n$ of a symplectic leaf $\Sc$ of type $\alpha$ is an affine variety of dimension $2\ha{\al,\rho}$. Let $\Sc$ and $\Sc'$ be a pair of leaves with invariant polynomials $(d_1 \sco d_m)$ and $(d'_1 \sco d'_m)$ respectively. Then $\Sc' \subset \ov\Sc$ if and only if $\det\Sc = \det\Sc'$ and the product $d_1 \dots d_k$ is divisible by $d'_1 \dots d'_k$ for all $1 \le k \le m$.
\end{prop}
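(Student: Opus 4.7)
First, since $\Pc_n$ is isomorphic as an algebraic variety to $\Mat_m(\C)^n$ (parametrised by the $n$ subleading matrix coefficients), it is affine, so the closure $\ov\Sc\subs\Pc_n$ is automatically affine. The leaf $\Sc$ itself is irreducible by Theorem~\ref{th-leaves} combined with \cite[Thm.~1.4]{Ric92}, hence $\dim\ov\Sc = \dim\Sc = 2\ha{\al,\rho}$ by the corollary above.

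For the necessity direction of the closure criterion, the plan is to exploit closedness of the relevant divisibility conditions. The determinant $\det\colon\Pc_n\to\C^{mn}$ is a regular map, constant on each symplectic leaf by Proposition~\ref{prop_conj_inv}, hence constant on $\ov\Sc$; this yields $\det\Sc = \det\Sc'$. For each $r=1\sco m$, the condition that a given polynomial $f\in\C[z]$ divide every $r\times r$ minor of a matrix polynomial is closed, being the vanishing of the coefficients of each remainder modulo $f$. Since on $\Sc$ the greatest common divisor of $r\times r$ minors equals $d_{m-r+1}\cdots d_m$, every matrix polynomial in $\ov\Sc$ has all its $r\times r$ minors divisible by $d_{m-r+1}\cdots d_m$; applied to $P\in\Sc'$ this gives $d_{m-r+1}\cdots d_m \mid d'_{m-r+1}\cdots d'_m$, which together with $\det\Sc = \det\Sc'$ rearranges to $d'_1\cdots d'_k \mid d_1\cdots d_k$ for every $k$.

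For sufficiency, once the determinant is fixed the divisibility order on invariant polynomials factors over irreducibles: for each irreducible $p\in\C[z]$ the $p$-adic valuations $(v_p(d_1)\ge\dots\ge v_p(d_m))$ form a partition of $v_p(\det\Sc)$, and the conditions on $d_1\cdots d_k$ are precisely the dominance orders on these partitions. The dominance order on partitions is classically generated by elementary single-box moves, so it suffices to realise each such move -- concretely, $d'_j\mapsto d'_j/p$, $d'_i\mapsto d'_i\cdot p$ for some $i<j$ with $p\mid d'_j$ and the new tuple still satisfying the Smith divisibility -- by an explicit one-parameter family in $\Pc_n$. The prototype is the $2\times 2$ upper-triangular block
$$
\begin{pmatrix} d'_i & t\cdot d'_j/p \\ 0 & d'_j \end{pmatrix},
$$
whose Smith form is $(d'_i,d'_j)$ at $t=0$ and $(d'_i\cdot p,\,d'_j/p)$ for $t\ne 0$; chaining these covering specialisations connects $\Sc$ to $\Sc'$ and gives $\Sc'\subs\ov\Sc$.

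The main obstacle will be keeping such deformations inside $\Pc_n$, i.e.\ monic of degree exactly $n$. Naive diagonal representatives $\mathrm{diag}(d'_1,\dots,d'_m)$ lie in $\Pc_n$ only when every $d'_i$ has degree $n$, which is a highly exceptional situation. The plan is to fix, for each $\Sc'$, a distinguished Frobenius-type representative in $\Pc_n\cap\Sc'$ in which the two invariant factors $d'_i,d'_j$ to be altered appear as a $2\times 2$ block of the above form, with the remaining rows and columns arranged to balance the degrees and produce a monic leading coefficient. Since left and right multiplication by $GL_m[z]$ preserves leaves, the representative may be renormalised before each elementary deformation, so that the full chain of covering specialisations takes place entirely in $\Pc_n$.
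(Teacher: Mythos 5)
Your plan is essentially the paper's own proof: the paper simply says the argument repeats that of Beauville--Laszlo's Proposition 2.6, which is precisely your combination of semicontinuity of divisibility of $r\times r$ minors (giving necessity) with reduction of the dominance order to elementary single-box moves realised by explicit $2\times 2$ upper-triangular degenerations (giving sufficiency). The one step you leave as a sketch --- choosing a degree-balanced representative so that the whole chain of degenerations stays monic of degree exactly $n$ --- is the same bookkeeping carried out in the cited reference, so nothing essential is missing.
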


\begin{proof}
The proof repeats the one of~\cite[Proposition 2.6]{BL94}.
\end{proof}

\subsection{Factorization on generic leaves}

We start with a few more definitions.

\begin{defn}
Consider a matrix $P(z)\in\Pc_n$. The $mn$ roots of the polynomial $\det P(z)$ are called the \emph{eigenvalues} of $P(z)$. For any eigenvalue $\la$ the matrix $P(\la)$ has a nonzero kernel, elements of this kernel are called the \emph{eigenvectors} of $P(z)$ corresponding to the eigenvalue $\la$.
\end{defn}

\begin{defn}
The \emph{spectrum} $\Sp(\Sc)$ of a symplectic leaf $\Sc$ is the collection of eigenvalues of some (thus any) polynomial $P(z)\in\Sc$. Symplectic leaf is said to be \emph{generic} if its eigenvalues are pairwise distinct.
\end{defn}

Clearly, generic leaves are just fibers $\chi_n^{-1}(x)$ of $\Pc_n$ over a generic point $x\in\C^{mn}/S_{mn}$.

\begin{defn}
A matrix polynomial $z-A$, $A\in\Mat_m(\C)$, is said to be a \emph{right divisor} of $P(z)\in\Pc_n$ if $P(z) = Q(z)(z-A)$ for some polynomial $Q(z)\in\Pc_{n-1}$.
\end{defn}

The following lemma appears in~\cite{Bor04}.

\begin{lemma}
\label{lem-div}
Consider a matrix polynomial $P(z) \in \Pc_n$ with distinct eigenvalues. Let $\la_1\sco\la_m$ be some eigenvalues of $P(z)$ and $v_1 \sco v_m$ be the corresponding eigenvectors. Assume that $v_1 \sco v_m$ are linearly independent, and consider a matrix $A\in\Mat_m(\C)$ defined by $Av_i = \la_iv_i$ for $i=1 \sco m$. Then $z-A$ is a right divisor of $P(z)$. Moreover, if the roots of $\det P(z)$ are pairwise distinct then $A$ is uniquely defined by the conditions that $z-A$ is a right divisor of $P(z)$ and $\Sp(A) = \hc{\la_1 \sco \la_m}$.
\end{lemma}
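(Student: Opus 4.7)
The plan is to combine right division of matrix polynomials with the spectral information encoded in the eigenvectors $v_i$.

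First I would establish the right division algorithm: for any $P(z)\in\Pc_n$, and any matrix $A\in\Mat_m(\C)$, one can write
\[
P(z) = Q(z)(z-A) + R, \qquad Q(z)\in\Mat_m[z] \text{ of degree } n-1,\ R\in\Mat_m(\C).
\]
This is obtained by induction on the degree: if $P(z)=P_k z^k + \text{(lower terms)}$, then $P(z)-P_k z^{k-1}(z-A) = P_k A z^{k-1} + \text{(lower terms)}$ has degree strictly less than $k$, so one continues until the remainder has degree $0$. Monicity of $P$ forces $Q$ to be monic.

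Next I would prove $R=0$ using the hypothesis on the eigenvectors. Substitute $z=\la_i$ and apply to $v_i$:
\[
P(\la_i)v_i = Q(\la_i)(\la_i-A)v_i + Rv_i.
\]
By the definition of $A$, $(\la_i - A)v_i = 0$, and by the definition of an eigenvector of $P$, $P(\la_i)v_i=0$. Thus $Rv_i=0$ for every $i$. Since $v_1\sco v_m$ form a basis of $\C^m$, $R=0$, giving $P(z)=Q(z)(z-A)$ with $Q\in\Pc_{n-1}$.

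For the uniqueness claim, I would first check that when the $mn$ roots of $\det P(z)$ are pairwise distinct, the kernel $\ker P(\la_i)$ is one-dimensional for each eigenvalue $\la_i$: using the Smith normal form $P(z)=U(z)D(z)V(z)$ with $D(z)=\diag(d_1(z)\sco d_m(z))$, we have $\det P(z) = c\cdot d_1(z)\cdots d_m(z)$, so a simple root $\la_i$ of $\det P$ is a simple root of exactly one $d_j$, forcing $\rk P(\la_i)=m-1$. Now suppose $z-A'$ is another right divisor of $P(z)$ with $\Sp(A')=\{\la_1\sco\la_m\}$. For each $i$, pick an eigenvector $v_i'$ of $A'$ with eigenvalue $\la_i$; the same argument as above shows $P(\la_i)v_i'=0$, so $v_i' \in \ker P(\la_i) = \C v_i$. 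Rescaling, $A'v_i = \la_i v_i = Av_i$ for all $i$, and since the $v_i$ are a basis, $A'=A$.

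The only delicate step is the codimension computation showing $\dim\ker P(\la_i)=1$ under the genericity hypothesis, which is handled cleanly by the Smith normal form already invoked in Section~\ref{section-SNF}; everything else is the standard right-division argument adapted to the noncommutative matrix setting.
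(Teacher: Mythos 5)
Your proof is correct and complete. Note that the paper does not actually prove Lemma~\ref{lem-div} at all --- it simply cites Borodin's paper on isomonodromy transformations of difference equations --- so your write-up supplies an argument the text leaves implicit. The route you take is the standard one from the matrix-polynomial literature: the right-division (generalized B\'ezout) step $P(z)=Q(z)(z-A)+R$ with constant remainder, the observation that evaluating at the scalar $\la_i$ and applying to $v_i$ kills both $P(\la_i)v_i$ and $(\la_i-A)v_i$ so that $Rv_i=0$, and then $R=0$ because the $v_i$ span $\C^m$. Your uniqueness argument is also sound: the only point that needs care is that $\ker P(\la_i)$ is one-dimensional when $\la_i$ is a simple root of $\det P$, and reducing to the Smith normal form $P=UDV$ (already set up in Section~\ref{section-SNF}) handles this cleanly, since a simple root of $\det P = c\, d_1\cdots d_m$ is a simple root of exactly one invariant polynomial, so $P(\la_i)$ has rank $m-1$; then any eigenvector of a competing divisor $A'$ for $\la_i$ is forced into $\C v_i$, and $A'=A$ follows because the $v_i$ form a basis. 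Two cosmetic remarks: in the division step the coefficient of $z^{k-1}$ after one subtraction is $P_{k-1}+P_kA$ rather than $P_kA$ alone (harmless, since only the degree drop matters), and it is worth stating explicitly that $A'$, having $m$ distinct eigenvalues, is diagonalizable so that an eigenvector for each $\la_i$ actually exists.
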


\begin{cor}
\label{cor-factor}
Let $\Sc\subs\Pc_n$ be a generic leaf. Then any matrix polynomial $P(z)\in\Sc$ admits a decomposition into linear factors.
\end{cor}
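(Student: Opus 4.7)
The plan is to argue by induction on $n$, with Lemma~\ref{lem-div} supplying the inductive step. The base case $n=1$ is trivial since any $P(z)\in\Pc_1$ already has the form $zI-P_1$. For $n\ge 2$, given $P(z)$ in a generic leaf $\Sc\subs\Pc_n$, one knows that $\det P(z)$ has $mn$ pairwise distinct roots. The strategy is to select $m$ of these eigenvalues $\la_1\sco\la_m$ together with corresponding eigenvectors $v_1 \sco v_m$ forming a basis of $\C^m$, and then invoke Lemma~\ref{lem-div} to obtain $A\in\Mat_m(\C)$ with $Av_i=\la_iv_i$ and a factorization $P(z)=Q(z)(z-A)$ for some $Q(z)\in\Pc_{n-1}$.

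The only substantive step is to produce such a choice of eigenvalues. To this end, I would pass to the $mn \times mn$ block companion matrix $C_P$ of $P(z)$, which satisfies $\det(zI_{mn}-C_P)=\det P(z)$. A direct computation shows that every eigenvector of $C_P$ at an eigenvalue $\la$ has the form $(v,\la v\sco \la^{n-1}v)^T$ with $v\in\ker P(\la)$, so the projection onto the top block $\C^m$ identifies eigenvectors of $C_P$ with eigenvectors of $P(z)$. Since $P(z)$ has $mn$ distinct eigenvalues, $C_P$ is diagonalizable and its eigenvectors span $\C^{mn}$; projecting onto the top block, the collection of eigenvectors of $P(z)$ over all its eigenvalues must span $\C^m$, and $m$ of them may be extracted to form a basis.

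With the factorization $P(z)=Q(z)(z-A)$ in hand, multiplicativity of the determinant gives $\det Q(z)\cdot\det(zI-A)=\det P(z)$, so the eigenvalues of $Q(z)$ are the remaining $m(n-1)$ roots of $\det P(z)$ and are therefore pairwise distinct. Hence $Q(z)$ lies in a fiber $\chi_{n-1}^{-1}(y)$ of the map~\eqref{chi} over a generic point, which is itself a generic leaf, so the inductive hypothesis applies and factors $Q(z)$ into linear terms. I expect the spanning argument via the companion matrix to be the only genuine obstacle; the remaining steps are bookkeeping built on Lemma~\ref{lem-div} and standard determinantal identities.
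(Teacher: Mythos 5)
Your proof is correct, and it follows the same overall strategy as the paper---peel off linear right divisors one at a time via Lemma~\ref{lem-div}, the only substantive point being to exhibit $m$ eigenvalues of $P(z)$ whose eigenvectors form a basis of $\C^m$---but it settles that one point by a genuinely different argument. The paper argues by contradiction: if the eigenvectors of $P(z)$ all lay in a subspace of dimension $k<m$, then restricting (projecting) $P(z)$ to that subspace would produce a monic $k\times k$ matrix polynomial whose determinant has degree $kn$ yet vanishes at all $mn$ distinct eigenvalues of $P(z)$, which is impossible. You instead pass to the $mn\times mn$ block companion linearization $C_P$: distinctness of the $mn$ roots of $\det P(z)=\det(zI_{mn}-C_P)$ forces $C_P$ to be diagonalizable, its eigenvectors $(v,\la v\sco\la^{n-1}v)^T$ span $\C^{mn}$, and projecting a spanning set onto the top block gives a spanning set of eigenvectors of $P(z)$ in $\C^m$, from which $m$ independent ones may be extracted. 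Both arguments are sound; yours trades the introduction of the linearization for avoiding the paper's slightly delicate step of ``restricting'' $P(z)$ to a subspace that need not be invariant, and it makes explicit the induction and the verification that the quotient $Q(z)$ again has pairwise distinct eigenvalues (hence lies in a generic leaf of $\Pc_{n-1}$), which the paper leaves implicit in the phrase ``it suffices to show that $P(z)$ has a linear right divisor.''
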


\begin{proof}
It suffices to show that $P(z)$ has a linear right divisor. Assume that it does not. Then by Lemma~\ref{lem-div} all the eigenvectors of $P(z)$ lie in some $k$-dimensional subspace of $\C^m$, $k<m$. Let $Q(z)$ be the restriction of $P(z)$ onto this subspace. Then $Q(z)$ is a monic matrix polynomial of degree at most $n$ and coefficients in $\Mat_k(\C)$. Its determinant $\det(Q(z))$ is a polynomial of degree at most $dn$. On the other hand, any of $mn$ distinct roots of $\det(P(z))$ is also a root of $\det(Q(z))$, and we arrive at a contradiction.
\end{proof}

Let $\gl_m^*$ be the dual space to the Lie algebra $\gl_m$. We endow $\gl_m^*$ with the Kirillov-Kostant-Souriau Poisson bracket and a product of $n$ copies of $\gl_m^*$ with the product Poisson structure. The following lemma is classical, and goes back at least to~\cite[Chapter IV, \S3]{FT87}.

\begin{lemma}
\label{lem_Poisson}
Map
\beq
\label{product}
\underbrace{\gl_m^* \times \dots \times \gl_m^*}_n \to \Pc_n, \qquad (A_1 \sco A_n) \mapsto (z-A_1)\dots(z-A_n).
\eeq
is Poisson
\end{lemma}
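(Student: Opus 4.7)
The plan is to verify the Poisson property directly on generating functions, using the $RTT$-presentation of the Yangian bracket~\eqref{Poisson}, and then to reduce the general case to the case $n=1$ via a multiplicativity lemma and induction on $n$.

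First I would identify $\gl_m^* \simeq \gl_m$ via the trace pairing and let $A = \sum a_{ij}E_{ij}$ be the universal matrix of coordinate functions on a single copy of $\gl_m^*$. Under this identification, the KKS bracket takes the Leningrad form
\beqs
\hc{A_1,A_2}_{\mathrm{KKS}} = [\Omega,\,A_1] = -[\Omega,\,A_2],
\eeqs
where $A_1 = A\otimes\one$ and $A_2 = \one\otimes A$ act on $\C^m\otimes\C^m$. The map~\eqref{product}, followed by the identification $\Pc_n \simeq \M_n$ of~\eqref{MP}, sends $(A_1\sco A_n)$ to the generating series
\beqs
T(u) = \br{\one - A_1 u^{-1}}\cdots\br{\one - A_n u^{-1}}.
\eeqs
Thus, to prove the lemma it suffices to verify that the pullback of the RTT relation~\eqref{Poisson} holds on $\gl_m^{*\,n}$ with the product KKS bracket.

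For the base case $n=1$, I would write $T(u)\otimes T(v) = \one - A_1/u - A_2/v + A_1A_2/(uv)$ and expand
\beqs
\Bs{\frac{\Omega}{u-v},\,T(u)\otimes T(v)}
= \frac{1}{u-v}\Br{-\frac{[\Omega,A_1]}{u} - \frac{[\Omega,A_2]}{v} + \frac{[\Omega,A_1A_2]}{uv}}.
\eeqs
Using $[\Omega,A_2] = -[\Omega,A_1]$ and the intertwining identity $\Omega\cdot A_2 = A_1\cdot\Omega$ (hence $[\Omega,A_1]A_2 = A_1[\Omega,A_1]$, so $[\Omega,A_1A_2] = [\Omega,A_1](A_2-A_1) = -[\Omega,A_1](u-v) \cdot (\text{corrective term})$), the apparent pole at $u=v$ cancels and the expression collapses to $\frac{1}{uv}[\Omega,A_1]$, which matches $\{T(u) \otimes, T(v)\} = \frac{1}{uv}\{A_1,A_2\}_{\mathrm{KKS}}$. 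This handles $n=1$.

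The inductive step rests on the following multiplicativity lemma, which I would isolate and prove first: if $\wt T^{(1)}(u)$ and $\wt T^{(2)}(u)$ are matrix-valued generating series on two Poisson manifolds $X_1,X_2$, each satisfying~\eqref{Poisson}, and if the entries of $\wt T^{(1)}$ Poisson-commute with the entries of $\wt T^{(2)}$ on $X_1\times X_2$, then their product $T(u) = \wt T^{(1)}(u)\wt T^{(2)}(u)$ satisfies~\eqref{Poisson} as well. The proof is a direct Leibniz-rule expansion: writing $A = \wt T^{(1)}(u)_1$, $B = \wt T^{(2)}(u)_1$, $C = \wt T^{(1)}(v)_2$, $D = \wt T^{(2)}(v)_2$ (the subscripts being the auxiliary tensor factors), one has $\{A,D\} = \{B,C\} = 0$, so
\beqs
\hc{AB,\,CD} = \hc{A,C}DB + AC\hc{B,D}.
\eeqs
Substituting the hypothesis $\{A,C\}=[r,AC]$ and $\{B,D\}=[r,BD]$, using that $BD = DB$ and $AC = CA$ (different tensor factors), and noting $AC\cdot BD = T(u)\otimes T(v)$, the cross terms involving $r$ sandwiched between $AC$ and $BD$ cancel, leaving $[r,\,T(u)\otimes T(v)]$ exactly.

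With these two ingredients, induction on $n$ finishes the proof: apply the multiplicativity lemma to $\wt T^{(1)}(u) = \one - A_1/u$ and $\wt T^{(2)}(u) = (\one - A_2/u)\cdots(\one - A_n/u)$, where the first factor lives on the first copy of $\gl_m^*$ and the second on the remaining $n-1$ copies, so that their entries Poisson-commute by the definition of the product Poisson structure. The hard step is really the identity $[\Omega,A_1A_2] = [\Omega,A_1]A_2 - A_1[\Omega,A_1]$ plus its annihilation under $(u-v)^{-1}$ in the base case; once this bookkeeping is under control, the inductive mechanism is essentially the classical $RTT$ multiplicativity, and the rest of the argument is formal.
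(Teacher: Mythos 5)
The paper does not actually prove this lemma: it declares it classical and cites~\cite[Chapter IV, \S 3]{FT87}. Your argument is precisely the classical one being invoked there --- multiplicativity of the Sklyanin/RTT bracket under products of $T$-matrices with Poisson-commuting entries, plus the single-factor base case --- so it is a correct and more self-contained substitute for the citation. The multiplicativity lemma is sound as you state it: with $\hc{A,D}=\hc{B,C}=0$ and $B,C$ commuting as operators (different tensor factors), the Leibniz expansion gives $\hc{AB,CD}=\hc{A,C}DB+AC\hc{B,D}=[r,ACBD]=[r,T(u)\otimes T(v)]$, and the induction is immediate.

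One intermediate formula in your base case is wrong as written, though the conclusion you draw from it is right. You assert $[\Omega,A_1A_2]=[\Omega,A_1](A_2-A_1)=-[\Omega,A_1](u-v)\cdot(\text{corrective term})$; neither equality holds --- the commutator $[\Omega,A_1A_2]$ contains no $u$ or $v$, and since $A_1$ does not commute with $[\Omega,A_1]$ you cannot pull $A_1$ past it to factor out $(A_2-A_1)$. In fact your own identity $[\Omega,A_1]A_2=A_1[\Omega,A_1]$ already forces $[\Omega,A_1A_2]=[\Omega,A_1]A_2-A_1[\Omega,A_1]=0$; more directly, $\Omega A_1A_2=A_2\Omega A_2=A_2A_1\Omega=A_1A_2\Omega$. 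So the quadratic term simply vanishes, and the cancellation of the pole at $u=v$ comes entirely from the linear terms, $-[\Omega,A_1]/u-[\Omega,A_2]/v=[\Omega,A_1](u-v)/(uv)$, after which division by $u-v$ yields $[\Omega,A_1]/(uv)$ exactly as you claim. Finally, note that under~\eqref{MP} the linear factor is $z-A$ with $A=-P_1$, so matching the bracket $\hc{t^{(1)}_{ij},t^{(1)}_{kl}}=\delta_{kj}t^{(1)}_{il}-\delta_{il}t^{(1)}_{kj}$ from~\eqref{Poisson-coord} against the KKS bracket involves a sign flip that determines which of the two standard sign conventions on $\gl_m^*$ the lemma uses; this is only a convention check, but it deserves a line.
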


We summarize results of this section in the following theorem.

\begin{theorem}
\label{th-factor}
Let $\Sc$ be a generic symplectic leaf on $\Pc_n$. Then
\begin{enumerate}
\item[1)] For any ordered partition
\begin{gather*}
\La = \hr{\La_1 \sco \La_n}, \qquad \La_i = \hc{\la_{i,1} \sco \la_{i,m}}, \\
 \la_{i,j}\in\Sp(\Sc), \qquad i=1 \sco n, \quad j=1 \sco m
\end{gather*}
of the spectrum of $\Sc$ there exists an open subset $\Sc_\La\subs\Sc$ and a birational Poisson isomorphism
$$
\phi_\La\colon \Sc_\La \to \Oc_{\La_1} \st \Oc_{\La_n}
$$
Here $\Oc_{\La_i}$ is the adjoint orbit on $\gl_m$ with the fixed spectrum $\La_i$, and $\phi_\La$ is inverse to the product map given by the formula~\eqref{product}. The subset $\Sc_\La\subs\Sc$ is described by conditions that vectors $v_1 \sco v_m$ from Lemma~\ref{lem-div} are linearly independent for each $\La_i$, $i=1 \sco n$. The symplectic leaf $\Sc$ is covered by the union of open subsets $\Sc_\La$.
\item[2)] For any pair $\La$ and $\Mu$ of ordered partitions of $\Sp(\Sc)$ there exists a birational Poisson map
$$
\tau_{\La\Mu} \colon \phi_\Mu(\Sc_\La \cap \Sc_{\Mu}) \to \phi_\La(\Sc_\La \cap \Sc_{\Mu}), \qquad \tau_{\La\Mu} = \phi_{\La}\circ(\phi_\Mu)^{-1}.
$$
\end{enumerate}
\end{theorem}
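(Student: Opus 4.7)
The plan is to build $\phi_\La$ by iterated right-divisor extraction via Lemma~\ref{lem-div}, and to inherit both the inverse and Poisson properties from the product map of Lemma~\ref{lem_Poisson}. Given $P(z)\in\Sc$ and the ordered partition $\La$, I first select the $m$ eigenvectors of $P(z)$ at the eigenvalues in $\La_n$; when these are linearly independent, Lemma~\ref{lem-div} uniquely produces $A_n$ with spectrum $\La_n$ and a factorization $P(z)=P_{n-1}(z)(z-A_n)$ with $P_{n-1}(z)\in\Pc_{n-1}$. Since the full spectrum of $P(z)$ is simple, $\det P_{n-1}(z)$ has the $m(n-1)$ distinct roots $\La_1\cup\cdots\cup\La_{n-1}$, and I iterate to obtain $A_{n-1} \sco A_1$. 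I define $\Sc_\La$ as the locus where each of these $n$ linear-independence conditions holds; this is a finite intersection of non-vanishing-determinant loci, hence Zariski-open in $\Sc$.

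Next I would verify the claimed properties of $\phi_\La$. For non-emptiness and coverage of $\Sc$ by the $\Sc_\La$: a direct check shows that for generic $(A_1 \sco A_n)\in\Oc_{\La_1}\times\cdots\times\Oc_{\La_n}$ the product $(z-A_1)\cdots(z-A_n)$ lies in $\Sc_\La$ (the eigenvectors of $(z-A_1)\cdots(z-A_n)$ at a $\la\in\La_n$ coincide with those of $A_n$, since $\la-A_i$ is invertible for $i<n$, and similarly inductively), while Corollary~\ref{cor-factor} ensures that every $P(z)\in\Sc$ admits some linear factorization and so lies in at least one $\Sc_\La$. By construction $\phi_\La$ is a one-sided inverse of the restricted product map on $\prod\Oc_{\La_i}$; the uniqueness statement of Lemma~\ref{lem-div} provides injectivity of that restriction, yielding a genuine two-sided inverse. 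A dimension count $\dim\Sc=2\ha{\al,\rho}=nm(m-1)=n\dim\Oc_{\La_i}$ (for the type $\al$ of the generic leaf, each regular semisimple orbit being of dimension $m(m-1)$) confirms that the two sides have matching dimensions.

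Finally, the Poisson property transfers: Lemma~\ref{lem_Poisson} says the unrestricted product map is Poisson, so its restriction to $\prod\Oc_{\La_i}$ is a Poisson map into $\Pc_n$ whose image lies inside $\Sc$ by spectrum matching (invoking the characterization of a generic leaf as a fiber of $\chi_n$ from~\eqref{chi}). Since both $\Sc_\La$ and $\prod\Oc_{\La_i}$ are symplectic of equal dimension, this restriction is a Poisson birational isomorphism onto $\Sc_\La$, and its inverse $\phi_\La$ is likewise Poisson birational. Part~(2) is then formal: on the open overlap $\Sc_\La\cap\Sc_\Mu$ the transition $\tau_{\La\Mu}=\phi_\La\circ\phi_\Mu^{-1}$ is a composition of birational Poisson isomorphisms. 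I expect the main obstacle to be confirming that the restricted product map really lands inside the single leaf $\Sc$ rather than merely in the ambient Poisson variety $\Pc_n$; this hinges on the spectrum-equals-fiber characterization of generic leaves noted above, together with showing that the image exhausts an open subset of that fiber.
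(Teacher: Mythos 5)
Your proposal is correct and follows essentially the same route as the paper: iterated right-divisor extraction via Lemma~\ref{lem-div} for existence and uniqueness of the ordered factorization, Corollary~\ref{cor-factor} for coverage by the sets $\Sc_\La$, and Lemma~\ref{lem_Poisson} for the Poisson property. You additionally spell out details the paper leaves implicit (the dimension count, the verification that the restricted product map lands in the single leaf $\Sc$ via the fiber description of generic leaves, and the non-emptiness of each $\Sc_\La$), all of which are sound.
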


\begin{proof}
Part 1 is equivalent to the statement that there exists an open set $\Sc_\La$ in a generic leaf $\Sc$, such that any polynomial $P(z)\in\Sc_\La$ admits decomposition into linear factors
$$
P(z) = (z-A_1) \dots (z-A_n), \qquad \Sp(A_i) = \La_i
$$
and such decomposition is unique. Both existence and uniqueness follow from Lemma~\ref{lem-div}. Indeed, $P(z)$ admits $z-A_n$ as a right divisor if the eigenvectors with eigenvalues in $\La_n$ are linearly independent, in which case $A_n$ is uniquely defined by $\La_n$. Then proceed by induction on $n$. In turn, Corollary~\ref{cor-factor} ensures that $\Sc$ is covered by the union of $\Sc_\La$. Part~2 is obvious.
\end{proof}

Note that for any map $\tau_{\La\Mu}$ can be written as a composition of maps exchanging a pair of eigenvalues of adjacent matrices $A_k$ and $A_{k+1}$. Propositions~\ref{prop_ortog} and~\ref{prop_trans} below provide a more explicit description of the map $\tau_{\La\Mu}$ by dealing with the case $n=2$. Assume that a matrix polynomial $P(z)$ of degree 2 admits decompositions $P(z)=(z-A)(z-B)$ and $P(z)=(z-\wt A)(z-\wt B)$, where
\begin{align}
&\Sp(A) = \hc{\la_1,\la_2\sco\la_m}, &&\Sp(B) = \hc{\mu_1,\mu_2\sco\mu_m},
\\ \label{tilde_spec}
&\Sp(\wt A) = \hc{\mu_1,\la_2\sco\la_m}, &&\Sp(\wt B) = \hc{\la_1,\mu_2\sco\mu_m},
\end{align}
and all $\la_i$, $\mu_j$, $i,j=1\sco m$ are distinct. In the rest of this section we set $\la = \la_1$ and $\mu=\mu_1$ for brevity.

\begin{prop}
\label{prop_ortog}
Consider vectors $u,v\in\C^m$ such that $Av = \la v$ and $u^tB = \mu u^t$. Then their scalar product $(u,v)$ is nonzero.
\end{prop}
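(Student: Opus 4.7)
The strategy is to identify the matrix $X = A - \wt A = \wt B - B$ as a rank-one operator $X = \kappa v u^t$ with $\kappa \ne 0$, and then read off $(u,v)$ by computing $\tr X$ in two ways. On the one hand, $\tr X = \tr A - \tr \wt A = \la - \mu$ by~\eqref{tilde_spec}; on the other, $\tr(\kappa v u^t) = \kappa u^t v$. Hence $(u,v) = (\la - \mu)/\kappa$, which is nonzero since $\la \ne \mu$.

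To see that $X$ has rank one, first expand both factorizations of $P(z)$ to get the identities $A + B = \wt A + \wt B$ and $AB = \wt A \wt B$; substituting $\wt A = A - X$ and $\wt B = B + X$ and cancelling $AB$ yields the intertwining relations $AX = X\wt B$ and $\wt A X = XB$. The first relation gives $p(A)X = X p(\wt B)$ for every polynomial $p$. Taking $p = \chi_{\wt B}$ and invoking Cayley-Hamilton yields $\chi_{\wt B}(A)X = 0$. Since $\chi_{\wt B}(z) = (z-\la)\prodl{j=2}{m}(z - \mu_j)$ and each factor $A - \mu_j$ with $j \ge 2$ is invertible (all the $\mu_j$ lie outside $\Sp A$ by the distinctness assumption), this collapses to $(A - \la) X = 0$, so $\Img X \subs \ker(A - \la) = \langle v \rangle$. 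The symmetric argument with $\wt A X = XB$ and $\chi_{\wt A}(z) = (z - \mu)\prodl{i=2}{m}(z - \la_i)$ produces $X(B - \mu) = 0$, i.e.\ $\ker X \sups \Img(B - \mu) = \{z \in \C^m : u^t z = 0\}$, a subspace of codimension one.

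Because $\Sp A \ne \Sp \wt A$, the matrix $X$ is nonzero, and both inclusions above are therefore equalities: $\Img X = \langle v \rangle$ and $\ker X = \{z : u^t z = 0\}$. Consequently $X = \kappa v u^t$ for a unique nonzero $\kappa$, and the trace computation finishes the proof. The main obstacle is pinning down both the image and the kernel of $X$ simultaneously; this relies crucially on the precise spectral swap~\eqref{tilde_spec}, which makes $\Sp A \cap \Sp \wt B = \{\la\}$ and $\Sp \wt A \cap \Sp B = \{\mu\}$ singletons, so that the Cayley-Hamilton identities $\chi_{\wt B}(A) X = 0$ and $X\chi_{\wt A}(B) = 0$ reduce to the linear consequences $(A - \la) X = 0$ and $X(B - \mu) = 0$.
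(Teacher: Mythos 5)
Your proof is correct, but it takes a genuinely different route from the paper's. The paper argues by contradiction: assuming $(u,v)=0$, it observes that $u$ is orthogonal to the eigenvectors $w_2\sco w_m$ of $B$, that $w=(\la-B)^{-1}v$ is an eigenvector of $P(z)$ for $\la$ and hence an eigenvector of $\wt B$, and that $w_2 \sco w_m$ are also eigenvectors of $\wt B$; this produces $m$ eigenvectors of $\wt B$ with distinct eigenvalues all lying in the $(m-1)$-dimensional subspace $u^\perp$, which is impossible. Your argument is direct: you extract from the two factorizations the intertwining relations $AX=X\wt B$ and $\wt AX=XB$ for $X=A-\wt A=\wt B-B$, and use Cayley--Hamilton together with the precise spectral swap~\eqref{tilde_spec} to get $(A-\la)X=0$ and $X(B-\mu)=0$; since the eigenvalues are simple, $\ker(A-\la)=\langle v\rangle$ and $\Img(B-\mu)=u^\perp$, and since $\la_1\ne\mu_1$ forces $X\ne0$, you conclude $X=\kappa vu^t$ with $\kappa\ne0$, whence $\kappa(u,v)=\tr X=\la-\mu\ne0$. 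All steps check out. Your route buys more than the paper's: it shows that $X=(\la-\mu)vu^t/(u,v)$ is \emph{forced}, i.e., it derives the explicit formulas of Proposition~\ref{prop_trans} as necessary conditions (and in particular the uniqueness of the swapped factorization), whereas the paper's contradiction argument is shorter but yields only the non-vanishing statement and must verify the formulas for $\wt A,\wt B$ separately afterwards. Both proofs rely equally on the standing hypothesis that the second factorization $(z-\wt A)(z-\wt B)$ with the prescribed spectra exists.
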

\begin{proof}
Let $w_i$ be the eigenvectors of $B$ with eigenvalues $\mu_i$ respectively. Note that $u$ is not an eigenvector of $B$ but of $B^t$. For $i=2 \sco m$ vector $u$ is orthogonal to $w_i$. Indeed,
$$
\mu(u,w_i) = \mu u^tw_i = u^tBw_i = \mu_i u^tw_i = \mu_i(u,w_i)
$$
which yields $(u,w_i)=0$. Assume that $(u,v)=0$. Then $v$ is a linear combination of vectors $w_i$, $i=2 \sco m$.

Consider a matrix polynomial $P(z) = (z-A)(z-B)$. It has an eigenvector $w = (\la-B)^{-1}v$ with eigenvalue $\la$. Note that
$$
u^tw = \frac{1}{\la-\mu}u^tv = 0.
$$
Therefore, $w$ also lies in a linear combination of vectors $u_i$, where $i=2 \sco m$. Now, if $P(z) = (z-\wt A)(z-\wt B)$, the matrix $\wt B$ has eigenvectors $w$ and~$u_i$, $i=2 \sco m$. Then the restriction of $\wt B$ onto the $(m-1)$-dimensional subspace generated by $u_i$, $i=2 \sco m$, has $m$ distinct eigenvalues. Thus, we arrive at a contradiction and~$(u,v)\ne0$.
\end{proof}

\begin{prop}
\label{prop_trans}
One has
\beq
\label{tildeAB}
\wt A = A + (\mu-\la)T, \qquad \wt B = B + (\la-\mu)T
\eeq
where $T$ is a projector onto $v$ along $u^t$.
\end{prop}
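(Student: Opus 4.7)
The plan is to verify the formulas~\eqref{tildeAB} directly by defining candidate matrices and then invoking the uniqueness from Lemma~\ref{lem-div}. Proposition~\ref{prop_ortog} guarantees $(u,v)\ne 0$, so we may set
\beqs
T = \frac{v u^t}{u^t v},
\eeqs
which is a rank-one idempotent ($T^2=T$) satisfying $Tv=v$ and $u^t T = u^t$. The key computational identities
\beqs
AT = \la T, \qquad TB = \mu T
\eeqs
follow instantly from $Av=\la v$ and $u^t B=\mu u^t$.

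Next, I will put $\hat A = A+(\mu-\la)T$ and $\hat B = B+(\la-\mu)T$ and show $(z-\hat A)(z-\hat B) = P(z)$. The sum $\hat A+\hat B = A+B$ is immediate, so the only thing to check is $\hat A\hat B = AB$. Expanding,
\beqs
\hat A\hat B - AB = (\la-\mu)\bigl(AT - TB - (\la-\mu)T^2\bigr),
\eeqs
which vanishes because $AT-TB = (\la-\mu)T = (\la-\mu)T^2$ by the identities above. Hence $(z-\hat A)(z-\hat B)=(z-A)(z-B)=P(z)$.

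It remains to show that $\hat A$ and $\hat B$ have the prescribed spectra~\eqref{tilde_spec}, after which the uniqueness part of Lemma~\ref{lem-div} will force $\hat A=\wt A$ and $\hat B=\wt B$. For this I will compute the characteristic polynomial by the matrix determinant lemma:
\beqs
\det(z-\hat A) = \det(z-A)\br{1 - (\mu-\la)\,\frac{u^t(z-A)^{-1}v}{u^t v}}.
\eeqs
Since $Av=\la v$, we have $(z-A)^{-1}v = (z-\la)^{-1}v$, so the bracketed factor simplifies to $(z-\mu)/(z-\la)$, giving
\beqs
\det(z-\hat A) = \frac{z-\mu}{z-\la}\det(z-A) = (z-\mu)(z-\la_2)\cdots(z-\la_m),
\eeqs
which is exactly $\Sp(\wt A)$. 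The same computation with the roles of rows and columns swapped (or applied to $\hat B^t$) yields $\Sp(\hat B)=\{\la,\mu_2\sco\mu_m\}=\Sp(\wt B)$.

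The only delicate point is therefore the well-posedness of $T$, which is precisely the content of Proposition~\ref{prop_ortog}; once $T$ is in hand, the verification is algebraic and the uniqueness from Lemma~\ref{lem-div} closes the argument. I do not expect a genuine obstacle.
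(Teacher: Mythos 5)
Your proposal is correct and follows essentially the same route as the paper: define $T=vu^t/(u,v)$ (well-posed by Proposition~\ref{prop_ortog}), use $AT=\la T$, $TB=\mu T$, $T^2=T$ to check that the sum and product, hence $(z-\hat A)(z-\hat B)$, are unchanged, and then match the spectra. The only cosmetic difference is that you compute the characteristic polynomials of $\hat A$ and $\hat B$ via the matrix determinant lemma, whereas the paper exhibits explicit eigenvectors $\wt v_i,\wt u_i$; your explicit appeal to the uniqueness statement of Lemma~\ref{lem-div} to identify $\hat A,\hat B$ with $\wt A,\wt B$ is a welcome clarification of a step the paper leaves implicit.
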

\begin{proof}
The projector $T$ can be written as
\beq
\label{T}
T = \frac{vu^t}{(u,v)}
\eeq
and is well defined due to Proposition~\ref{prop_ortog}. Consider vectors $v_i,u_i\in\C^m$ such that $Av_i = \la_iv_i$, $u_i^tB = \mu_iu_i^t$ for $i = 2 \sco m$. Note that unlike in Proposition~\ref{prop_ortog}, vectors $u_i$ are not eigenvectors of $B$ but of $B^t$, as well as vector $u$. A straightforward check shows that vectors
$$
\wt v_i = v_i + \frac{\mu-\la}{\la_i-\mu}\frac{(u,v_i)}{(u,v)}
\qquad\text{and}\qquad
\wt u_i = u_i + \frac{\la-\mu}{\mu_i-\la}\frac{(u_i,v)}{(u,v)}
$$
satisfy $\wt A\wt v_i = \la_i\wt v_i$, $\wt u_i^t\wt B = \mu_i\wt u_i^t$ for $i = 2 \sco m$. One also has $\wt Av = \mu v$ and $u^t\wt B = \la u^t$. Thus, spectra of $\wt A$ and $\wt B$ are as in~\eqref{tilde_spec}.

To prove that $(z-A)(z-B) = (z-\wt A)(z-\wt B)$ one needs to check that $A+B = \wt A + \wt B$ and $AB = \wt A = \wt B$. The first equality is obvious from~\eqref{tildeAB}. Using that $T^2=T$ since $T$ is a projector, and that $AT = \la T$ and $TB = \mu T$, which follows from formula~\eqref{T}, one has
$$
\wt A \wt B = AB + (\la-\mu)\la T + (\mu-\la)\mu T - (\la-\mu)^2T^2 = AB.
$$
This finishes the proof.
\end{proof}

\section{Discussion}

Let us remind some of the previous results on the Poisson geometry of affine Grassmannians and moduli spaces of monopoles. In~\cite{GKL04} a series of infinite-dimensional \emph{Gelfand-Zetlin type} $\U(\gl_m)$-modules is constructed. The underlying vector space for these modules is the space of meromorphic functions in $m(m-1)/2$ variables on which elements of the algebra $\U(\gl_m)$ act by difference operators. The corresponding actions are classified by generic $\gl_m$ coadjoint orbits, or equivalently, by a set of $m$ distinct complex numbers. The authors also consider pull-backs of these modules under the evaluation homomorphism $\Y(\gl_m) \to \U(\gl_m)$. This construction is generalized in~\cite{GKLO05} to the case of any complex semi-simple Lie algebra. Namely, it is shown that for any dominant coweight $\al$ of the algebra $\g$ and a certain collection of parameters $c_i$ there exists a $\Y(\g)$-module, which in case $\g=\sl_m$ and $\al$ being the first fundamental coweight coincides with (the restriction to $\Y(\sl_m)$ of) one of the modules from~\cite{GKL04}. Next, relying on the fact that the constructed $\Y(\g)$-modules are defined by the action of ratios of Drinfeld's new coordinates~\cite{Dri88}, the authors of~\cite{GKLO05} suggest that functions~\rf{monopole} define a set of coordinates on an open set in every symplectic leaf, and that their degrees, considered as algebraic maps $\Pbb^1 \to \Pbb^1$, classify symplectic leaves on $\M$. We remark that this is not totally correct, the degrees only determine the type of a leaf.

Let us remind that the \emph{moduli space of $G$-monopoles} is a space of based holomorphic maps from $\Pbb^1$ to the flag variety $X = G/B_-$ sending $\infty$ to $B_- \in X$. The map $f\colon\Pbb^1 \to X$ is said to be of degree $\al$ where $\al \in H_2(X,\Z)$ is a positive coweight of $\g$, if for the fundamental class $[\Pbb^1] \in H_2(\Pbb^1,\Z)$ we have $f_*[\Pbb^1]=\al$. We denote the space of maps of degree $\al$ by $\Maps^\al(\Pbb^1,X)$. Note, that the maps
\beq
\label{open}
\Pbb^1 \to \hr{\Pbb^1 \times \dots \times \Pbb^1}, \qquad \infty \mapsto (0 \sco 0)
\eeq
of multidegree $\al$ form an open subset in $\Maps^\al(\Pbb^1,X)$. This observation allows the authors of~\cite{GKLO05} to identify open subsets of symplectic leaves on $\M$ with open subsets in the moduli space of $G$ monopoles with the use of functions~\rf{monopole}. The space $\Maps^\al(\Pbb^1,X)$ admits a partial compactification, also known as \emph{Zastava space}. Set-theoretically, Zastava space is described as follows
\beq
\label{zastava}
\Zc^\al = \sums{\beta\le\al} \Maps^\beta(\Pbb^1,X) \times \Sym^{\al-\beta}(\C),
\eeq
where $\Sym^{\al-\beta}$ is the space of \emph{colored divisors} of the form $\sum\ga_ic_i$ where $c_i\in\C$ and $\ga_i$ are positive coweights of $\g$ satisfying $\sum\ga_i = \al-\beta$. Note, that the space of based maps $\Pbb^1 \to \Pbb^1$ of degree $r$ coincides with the space of rational functions $P(z)/Q(z)$ where $Q$ is a monic polynomial of degree $r$, $P$ is a polynomial of degree less than $r$, and they do not have common roots. In the same vein, the space~\rf{open} may be thought of as a collection of such rational functions. Then, an open subset of the Zastava space~\rf{zastava} is again a collection of such rational functions with the condition on the absence of common roots dropped. From this viewpoint, the colored divisors encode the same information as the Smith normal forms of symplectic leaves described in the article. We refer the reader to~\cite{Bra06, BF14, FM99} and references therein for details on Zastava spaces.

Now, let $\al$ and $\beta$ be a pair of dominant coweights of $\g$. Denote by
$$
\Gr^\al = G[t]t^\al \qquad\text{and}\qquad \Gr_\beta = G_1[t^{-1}]t^\beta
$$
the corresponding orbits in the thin affine Grassmannian $\Gr = G[t,t^{-1}]/G[t]$. Let us consider subvarieties in $\Gr$ of the form
$$
\Gr^\al_\beta = \Gr^\al \cap \Gr_\beta \qquad\text{and}\qquad \Gr^{\ov\al}_\beta = \ov{\Gr^\al} \cap \Gr_\beta,
$$
where $\Gr^{\ov\al} = \bigsqcup_{\ga\le\al}\Gr^\ga$ is the closure of $\Gr^\al$. It is known that $\Gr^{\ov\al}_\beta$ is nonempty if and only if $\al\ge\beta$. The results of~\cite{GKLO05} were generalised in~\cite{KWWY14}. First, it was shown~\cite[Theorem 2.5]{KWWY14} that subvarieties $\Gr^\al_\beta$ are symplectic leaves in $\Gr$. Second, given a slice $\Gr^{\ov\al}_\beta$ and a collection of parameters $\bar{\bf c}$ there was constructed a family of representations of the shifted Yangian $\Y_\beta(\g)$, coinciding with representations from~\cite{GKLO05} for $\beta=0$. It was shown (modulo some technical conjecture) that the image $\Y^\la_\beta(\bar{\bf c})$ of $\Y_\beta(\g)$ quantizes the algebra $\Oc(\Gr^{\ov\al}_\beta)$. It is known~\cite{BF14} that for any slice $\Gr^{\ov\al}_\beta$ there exists a birational map
\beq
\label{s}
s^\al_\beta \colon \Gr^{\ov\al}_\beta \to \Zc^{\al-\beta}
\eeq
given in local coordinates~\rf{monopole} by
$$
\Gr^{\ov\al}_\beta \ni g \mapsto \hr{e_1(g) \sco e_r(g)} \in (\Pbb^1 \times \dots \times \Pbb^1) \subs \Zc^{\al-\beta}.
$$
Recently~\cite{FKRD15} the map $s^\al_\beta$ was shown to be Poisson.

Let us position the results of this paper in the above background. For $\g = \sl_m$ Corollary~\rf{cor-leaves} sharpens the description of symplectic leaves given in~\cite{GKLO05}, it does not only classify symplectic leaves up to their type but provides the full set of their invariants. On the other hand, it generalises the description of leaves on an open set $G_1[t^{-1}] \subs \Gr$ from~\cite{KWWY14}, the latter correspond to the leaves of $\G$ with all roots of invariant polynomials being 0. In particular, this explains the connection between the descriptions of~\cite{GKLO05} and~\cite{KWWY14}, indeed there exists a unique leaf of each type with spectrum consisting only of zeros. Moreover, it seems that the parameters of quantization used in both~\cite{GKLO05} and~\cite{KWWY14} encode the spectra of the leaves. Finally, it follows from~\cite{FKRD15} and the present article that with $\beta=0$ the map~\rf{s} extends to a Poisson map $\bar s^\al \colon \ov{\G^\al} \to \Zc^\al$, sending the roots of greatest common divisors of coordinates $a_i$ and $b_i$ from~\rf{Drinfeld} to the colored divisors in $\Zc^\al$. Moreover, restriction of $\bar s^\al$ to any symplectic leaf of type $\al$ is birational. Here $\ov{\G^\al}$ denotes the Poisson submanifold of $\G$ consisting of all the leaves of types $\beta\le\al$. This provides a way to realise colored divisors as Smith Normal Forms of symplectic leaves on $\G$ and extends known dictionary between affine Grassmannians and $G$-monopoles.

We conclude this paper with some unanswered questions. First, we think that the results of this paper should be generalised to semi-simple Lie algebras of arbitrary type and to Poisson structure given by trigonometric $r$-matrix. Second, we are curious to find the precise relation between quantization of the symplectic leaves obtained here, and the families of Yangian modules constructed in~\cite{GKLO05, KWWY14}. Finally, maps $\tau_{\La\Mu}$ from the Theorem~\ref{th-factor} suggest that tensor products of $\Y(\gl_m)$-modules constructed in~\cite{GKL04} admit intertwining operators labelled by permutations of the joint spectra of coadjoint orbits. In our opinion, it would be interesting to find precise formulas for these operators. We return to these questions in the forthcoming publication.


\begin{thebibliography}{35}

\bibitem{AH88}
{M.\,Atiyah, N.\,Hitchin}.
\emph{The Geometry and Dynamics of Magnetic Monopoles}.
Princeton University Press, 1988.

\bibitem{BL94}
{A.\,Beauville, Y.\,Laszlo}.
\emph{Conformal blocks and generalized theta functions}.
{Comm. Math. Phys.} \textbf{164}, 1994, 385-419.

\bibitem{Bor04}
{A.\,Borodin}.
\emph{Isomonodromy Transformations of linear Systems of Difference Equations}.
{Ann. of Math.} \textbf{160}(3), 2004, 1141-1182.

\bibitem{Bra06}
{A.\,Braverman}.
\emph{Spaces of quasi-maps into the flag varieties and their applications}.
{International Congress of Mathematicians. Vol. II}, Eur. Math. Soc., Z\"urich, 2006, 1145-1170.

\bibitem{BF14}
{A.\,Braverman, M.\,Finkelberg}.
\emph{Semi-infinite Schubert varieties and quantum $K$-theory of flag manifolds}.
{J. Amer. Math. Soc.} \textbf{27}, 2014, 1147-1168.

\bibitem{CP94}
{V.\,Chari, A.\,Pressley}.
\emph{A guide to quantum groups}.
Cambridge University Press, Cambridge, 1994.

\bibitem{Dri86}
{V.\,Drinfeld}.
\emph{Quantum groups}.
Proceedings of ICM, \textbf{1}, 1986, 798-820.

\bibitem{Dri88}
{V.\,Drinfeld}.
\emph{A new realization of Yanginas and quantized affine algebras}.
{Soviet Math. Dokl.} \textbf{36}(2), 1988, 212-216.

\bibitem{ES98}
{P.\,Etingof, O.\,Schiffman}.
\emph{Lectures on Quantum Groups}.
International Press, Cambridge, 1998.

\bibitem{FT87}
{L.\,Faddeev, L.\,Takhtajan}.
\emph{Hamiltonian Methods in the Theory of Solitons}.
Berlin, Heidelberg, New York: Springer, 1987.

\bibitem{FKMM99}
{M.\,Finkelberg, A.\,Kuznetsov, N.\,Markarian, I.\,Mirkovi\'c}.
\emph{A Note on the Symplectic Structure on the Space of G-monopoles }.
{Comm. Math. Phys.} \textbf{201}(2), 1999, 411-421.

\bibitem{FKRD15}
{M.\,Finkelberg, A.\,Kuznetsov, L.\,Rybnikov, G.\,Dobrovolska}.
\emph{Towards a Cluster Straucture in Trigonometric Zastava}.
{arXiv:1504.05605}, 2015.

\bibitem{FM99}
{M.\,Finkelberg, I.\,Mirkovic}.
\emph{Semi-infinite flags. I. Case of global curve $\mathbb{P}^1$}.
{Differential topology, infinite-dimensional Lie algebras, and applications}, 81–112, Amer. Math. Soc. Transl. Ser. 2, \textbf{194}, Amer. Math. Soc., Providence, RI, 1999.

\bibitem{FM14}
{V.\,Fock, A.\,Marshakov}.
\emph{Loop groups, Clusters, Dimers and Integrable systems}.
{arXiv:1401.1606}, 2014

\bibitem{Gek95}
{M.\,Gekhtman}.
\emph{Separation of Variables in the Classical Integrable SL(N) Magnetic Chain}.
{Comm. Math. Phys.} \textbf{167}(3), 1995, 593-605.

\bibitem{GSTV12}
{M.\,Gekhtman, M.\,Shapiro, S.\,Tabachnikov, A.\,Vainshtein}.
\emph{Higher pentagram maps, weighted directed networks, and cluster dynamics}.
{Electron. Res. Announc. Math. Sci.} \textbf{19}, 2012, 1-17.

\bibitem{GSTV14}
{M.\,Gekhtman, M.\,Shapiro, S.\,Tabachnikov, A.\,Vainshtein}.
\emph{Integrable cluster dynamics of directed networks and pentagram maps}.
{arXiv:1406.1883}.

\bibitem{GKL04}
{A.\,Gerasimov, S.\,Kharchev, D.\,Lebedev}.
\emph{Representation theory and quantum inverse scattering method: the open Toda chain and the hyperbolic Sutherland model}.
{Int. Math. Res. Notices} \textbf{17}, 2004, 823-854.

\bibitem{GKLO05}
{A.\,Gerasimov, S.\,Kharchev, D.\,Lebedev, S.\,Oblezin}.
\emph{On a Class of Representations of the Yangian and Moduli Space of Monopoles}.
{Comm. Math. Phys.} \textbf{260}(3), 2005, 511-525.

\bibitem{GLR82}
{I.\,Gohberg, P.\,Lancaster, L.\,Rodman}.
\emph{Matrix Polynomials}.
New York: Academic Press, 1982.

\bibitem{GK13}
{A.\,Goncharov, R.\,Kenyon}.
\emph{Dimers and cluster integrable systems}.
{arXiv:1107.5588}.

\bibitem{HKKR00}
{T.\,Hoffmann, J.\,Kellendonk, N.\,Kutz, N.\,Reshetikhin}.
\emph{Factorization dynamics and Coxeter-Toda lattices}.
{Comm. Math. Phys.} \textbf{212}(2), 2000, 297-321.

\bibitem{KWWY14}
{J.\,Kamnitzer, B.\,Webster, A.\,Weekes, O.\,Yacobi}.
\emph{Yangians and quantizations of slices in the affine Grassmannian}.
{Algebra Number Theory} \textbf{8}(4), 2014, 857-893.

\bibitem{KZ02}
{M.\,Kogan, A.\,Zelevinsky}.
\emph{On symplectic leaves and integrable systems in standard complex semisimple Poisson-Lie groups}.
{Intern. Math. Res. Notices}, 2002, 1685-1702.

\bibitem{KS98}
{S.\,Korogodsky, Y.\,Soibelman}.
\emph{Algebras of Functions on Quantum Groups I}.
{Amer. Math. Soc.}, Providence, RI, 1998.

\bibitem{Kum02}
{S.\,Kumar}.
\emph{Kac-Moody Groups, Their Flag Varieties, and Representation Theory}.
{Progr. Math.}, \textbf{204}, Birkhauser, Boston, MA, 2002.

\bibitem{LW90}
{J.-H.\,Lu, A.\,Weinstein}.
\emph{Poisson Lie groups, dressing transformations, and Bruhat decompositions}.
{J. Differential Geom.} \textbf{31}(2), 1990, 501-526.

\bibitem{LY08}
{J.-H.\,Lu, M.\,Yakimov}.
\emph{Group orbits and regular partitions of Poisson manifolds}.
{Comm. Math. Phys.} \textbf{283}(3), 2008, 729-748.

\bibitem{MV00}
{I.\,Mirkovi\'c, K.\,Vilonen}
\emph{Perverse sheaves on affine Grassmannians and Langlands duality}.
{Math. Res. Lett.} \textbf{7}(1), 2000, 13-24.

\bibitem{N09}
{A.\,Negu\c{t}}.
\emph{Laumon spaces and the Calogero-Sutherland integrable system}.
{Invent. Math.} \textbf{178}, 2009, 299-331.

\bibitem{RSTS93}
{A.\,Reyman, M.\,Semenov-Tian-Shansky}.
\emph{Group-theoretical method in the theory of finite-dimensional integrable systems}.
{Encyclopaedia of mathematical sciences} \textbf{16}, Dynamical systems VII, Springer, Berlin, 1993.

\bibitem{Ric92}
{R.\,Richardson}
\emph{Intersection of double cosets in algebraic groups}.
{Indag. Math.} \textbf{3}(1), 1992, 69-77.

\bibitem{STS85}
{M.\,Semenov-Tian-Shansky}.
\emph{Dressing Transformations and Poisson Group Actions}.
{Publ. Res. Inst. Math. Sci.} \textbf{21}(6), 1985, 1237-1260.

\bibitem{Skl92}
{E.\,Sklyanin}.
\emph{Separation of Variables in the Classical Integrable SL(3) Magnetic Chain}.
{Commun. Math. Phys.} \textbf{150}, 1992, 181-191.

\bibitem{Wil13}
{H.\,Williams}.
\emph{Double Bruhat Cells in Kac-Moody Groups and Integrable Systems}.
Lett. Math. Phys. \textbf{103}, 2013, 389-419.

\end{thebibliography}
\end{document}